\numberwithin{equation}{section}
\theoremstyle{plain}
\newtheorem{theorem}{Theorem}[section]
\newtheorem{lemma}[theorem]{Lemma}
\newtheorem{corollary}[theorem]{Corollary}
\theoremstyle{remark}
\newtheorem{definition}{Definition}
\newtheorem{assumptions}{Assumption}
\crefname{assumptions}{Assumption}{Assumptions}
\crefname{algocfline}{Algorithm}{Algorithms}
\newcommand{\xistref}{\hyperref[alg:xist]{Xist}\xspace}
\newcommand{\multixistref}{\hyperref[alg:xist_iterated]{Multi-Xist}\xspace}
\newcommand{\xvstnameref}{\hyperref[alg:xvst]{basic Xvst algorithm}\xspace}
\newcommand{\xistnameref}{\hyperref[alg:xist]{Xist algorithm}\xspace}
\DeclareMathOperator*{\argmin}{arg\,min}
\DeclareMathOperator{\bal}{Bal}
\DeclareMathOperator{\XC}{\mathrm{XC}}
\DeclareMathOperator{\MC}{\mathrm{MC}}
\DeclareMathOperator{\RC}{\mathrm{RC}}
\DeclareMathOperator{\NC}{\mathrm{NC}}
\DeclareMathOperator{\CC}{\mathrm{CC}}
\DeclareMathOperator{\kXC}{\mathrm{kXC}}
\DeclareMathOperator{\kMC}{\mathrm{kMC}}
\DeclareMathOperator{\xist}{\mathrm{Xist}}
\DeclareMathOperator{\Vloc}{V^{\mathrm{loc}}}
\DeclareMathOperator{\vol}{vol}
\DeclareMathOperator{\minn}{min}
\renewcommand{\vec}[1]{{\boldsymbol#1}} 
\newcommand{\mat}[1]{{\boldsymbol#1}}    
\newcommand{\R}{\mathbb{R}} 
\newcommand{\N}{\mathbb{N}} 
\newcommand{\BO}{\mathcal{O}} 
\newcommand{\eps}{\varepsilon} 
\newcommand{\abs}[1]{|#1|} 
\newcommand{\comp}[1]{{#1}^{\mathsf{c}}}
\newcommand{\norm}[1]{\left\lVert#1\right\rVert}
\newcommand{\mysubref}[2]{\hyperref[#1:#2]{\cref*{#1}~\ref*{#1:#2}}}
\newcommand{\assumptionref}[1]{\hyperref[#1]{Assumption \ref*{#1}}}
\renewcommand{\eqref}[1]{\hyperref[#1]{(\ref*{#1})}}
\newcolumntype{x}[1]{>{\centering\arraybackslash\hspace{0pt}}p{#1}}
\definecolor{orange1}{RGB}{239, 129, 27}
\definecolor{rblue}{rgb}{0,0,255}
\definecolor{rgreen3}{rgb}{0,205,0}
\definecolor{niceblue}{rgb}{0.15,0.15,1}
\definecolor{nicegreen}{rgb}{0.1,0.9,0.1}
\begin{document}

\title{A scalable clustering algorithm to approximate graph cuts}
\date{April 10, 2024}

\author{Leo Suchan\orcidlink{0000-0001-7212-5492}, Housen Li\orcidlink{0000-0002-2434-9878}, and Axel Munk\orcidlink{0000-0002-9181-9331}
\thanks{The authors are with the Institute of Mathematical Stochastics, Georg August University of Göttingen, Göttingen, Germany}
\thanks{H.\ Li and A.\ Munk are with the Cluster of Excellence \enquote{Multiscale Bioimaging: from Molecular Machines to Networks of Excitable Cells} (MBExC)}
}

\maketitle

\begin{abstract}
    Due to their computational complexity, graph cuts for cluster detection and identification are used mostly in the form of convex relaxations. We propose to utilize the original graph cuts such as Ratio, Normalized or Cheeger Cut to detect clusters in weighted undirected graphs by restricting the graph cut minimization to $st$-MinCut partitions. Incorporating a vertex selection technique and restricting optimization to tightly connected clusters, we combine the efficient computability of $st$-MinCuts and the intrinsic properties of Gomory-Hu trees with the cut quality of the original graph cuts, leading to linear runtime in the number of vertices and quadratic in the number of edges. Already in simple scenarios, the resulting algorithm Xist is able to approximate graph cut values better empirically than spectral clustering or comparable algorithms, even for large network datasets. We showcase its applicability by segmenting images from cell biology and provide empirical studies of runtime and classification rate.
\end{abstract}

\section{Introduction}
\label{sec:intro}

The detection and identification of clusters in datasets is a fundamental task of data analysis, with applications including image segmentation \citep{SnethilnathSindhuOmkar2014, vandenHeuvelMandlHulshoff2008, Wang_etal2020}, machine learning \citep{ChewCahill2015, Tang_etal2016, MalioutovBarzilay2006}, and parallel computing \citep{YuShenHu2016, Peng_etal2014, Chen_etal2011}. The construction and partitioning of a graph is key to many popular methods of cluster detection. Prominent graph cuts such as Ratio Cut \cite{HagenKahng1992}, Normalized Cut \cite{ShiMalik2000} or Cheeger Cut \cite{Cheeger1971} are designed to partition a graph in a \enquote{balanced} way while providing an intuitive geometric interpretation. Ratio Cut, for instance, attempts to balance cluster sizes, and Normalized Cut and Cheeger Cut aim for equal volumes of the resulting partitions. However, as the computation of the above mentioned cuts is an NP-hard problem (see e.g.\ \cite{Mohar1989, BuiJones1992, ShiMalik2000, SimaSchaeffer2006}), attention has shifted primarily to their various convex relaxations and regularizations. Most prominent are spectral clustering techniques, which can be viewed as a convex relaxation of the optimization problem underlying Normalized and Cheeger Cut \cite{vonLuxburg2007}. They offer a quick and simple way to partition a graph, with a worst-case runtime cubic in the number of vertices \cite{vonLuxburg2007}, and their practical success has been demonstrated in numerous applications. Consequently, they have been the subject of extensive theoretical analysis \citep{LuxburgBelkinBousqet2008, MaierLuxburgHein2013, TrillosSlepcev2018} as well as the inspiration for a multitude of specialized algorithms \citep{NgJordanWeiss2001, NascimentoCarvalho2011, ZhongPun2022}. However, it is well known that spectral clustering does not always yield a qualitatively sensible partition, the most famous example being the so-called \enquote{cockroach graph} and its variants \cite{GuatteryMiller1998}. More significantly, it has been shown that spectral clustering possesses fundamental flaws in detecting clusters of different scales \cite{NadlerGalun2007}.

To overcome these issues we propose a different approach which combines the versatility and quality of graph cuts with the most significant trait of spectral clustering, its quick and simple computability. While spectral clustering builds on Normalized Cut and Cheeger Cut alone, the suggested algorithm \emph{\xistref} is applicable to any balanced graph cut functional (also sometimes called sparsest cut), thus making it adaptable and scalable. In contrast to spectral clustering, our algorithm approximates graph cuts not by means of relaxation of the graph cut functional, but instead by restricting minimization onto a particularly designed subset of partitions. This subset is constituted by $st$-MinCuts, where $s$ and $t$ runs through a certain subset of vertices. These partitions can be computed very fast through max flows~\citep{Orlin2013} and the well-known duality of the max-flow and min-cut problems. Hence, our proposed algorithm \xistref to a large extent preserves the combinatorial nature of the problem while retaining a worst computational complexity \emph{quadratic} in the number of vertices and \emph{linear} in the number of edges (\cref{lem:xist_complexity}).

The rest of the paper is organized as follows. \cref{sec:notation} introduces the basic notation. The proposed 2-way cut algorithm is detailed in \cref{sec:algorithms}, together with its theoretical properties. In \cref{sec:simulations}, we study the performance of the proposed algorithm on simulated and real world datasets and introduce a multiway cut extension. \cref{sec:outlook} provides a discussion of further extensions and concludes the paper. Technical proofs are given in the \nameref{appendix}. 

\section{Definitions and notation}
\label{sec:notation}

\noindent We consider simple, undirected, and weighted graphs, denoted as $G = (V, E, \mat{W})$, with $V$ the vertex set, $E$ the set of edges and $\mat{W}=(w_{ij})_{i,j\in V}$ the weight matrix. Each entry $w_{ij}$ equals the weight of edge $\{i,j\}$ if $\{i,j\} \in E$ and zero otherwise. Thus, $\mat{W}$ is symmetric and has a zero diagonal.

\begin{definition}[Graph cut]
	\label{def:cuts}
	For a simple, undirected, weighted graph $G=(V,E,\mat{W})$, we define the \emph{(balanced) graph cut} (or: \emph{sparsest cut}) $\XC$ of $G$ as:
	\begin{align*}
	\XC(G) &\coloneqq \min_{S\subset V}\XC_S(G)\\\quad\text{where}\quad\XC_S(G) &\coloneqq \sum_{{i\in S, j\in \comp{S}}}\frac{w_{ij}}{\bal_G(S,\comp{S})}
	\end{align*}
	for $S\subset V$. Here, $\comp{S}\coloneqq V\setminus S$ is the complement of $S$, and $\XC$ and $\bal_G(S,\comp{S})$ serve as placeholders for the graph cut (see \cref{tbl:cuts}) and its corresponding balancing term, respectively.
\end{definition}

\cref{tbl:cuts} lists the balancing terms for Minimum Cut (MinCut), Ratio Cut, Normalized Cut (NCut) and Cheeger Cut. In general, these balancing terms can depend on the underlying graph structure, the partition $S$ as well as the weight matrix $\mat{W}$. For any partition $S\subset V$, define its \emph{size} as $\abs{S} \coloneqq  \#\{i\in S\}$ and its \emph{volume} as $\vol(S)\coloneqq \sum_{i\in S}\deg(i)$, where $\deg(i)\coloneqq\sum_{j\in V} w_{ij}$ is the \emph{degree} of vertex $i$. Assume $n\coloneqq \abs{V} < \infty$ and $m\coloneqq \abs{E}< \infty$.

\begin{table}[htpb]
    \centering
    \caption{Definitions of common (balanced) graph cuts.}
	\label{tbl:cuts}
	\begin{tabular}{llll}
			\toprule
			Cut name & $\XC$ & $\bal_G(S,\comp{S})$ & Reference\\
			\midrule
			Minimum Cut & $\MC$ & $1$ & E.g.~\cite{Cook_etal1998}\\
			Ratio Cut & $\RC$ & $\abs{S}\,\abs{\comp{S}}$ & \cite{HagenKahng1992}\\
			Normalized Cut & $\NC$ & $\vol(S)\vol(\comp{S})$ & \cite{ShiMalik2000}\\
			Cheeger Cut & $\CC$ & $\min\{\vol(S),\,\vol(\comp{S})\}$ & \cite{Cheeger1971}\\
			\bottomrule
	\end{tabular}
\end{table}

The disadvantage of (balanced) graph cuts is their computational complexity that is rooted in their combinatorial nature. It has been shown that the problem of computing NCut is NP-complete, see \cite[Appendix~A, Proposition~1]{ShiMalik2000}, and the idea behind the proof can be adapted for the other balanced cuts in \cref{tbl:cuts}, namely Ratio Cut and  Cheeger Cut (see \cite{SimaSchaeffer2006} for an alternative proof for the latter). Several other balanced graph cuts are also known to be NP-complete to compute, for instance, the graph cut with the balancing term $\bal_G(S,\comp{S})=\min\{|S|,|\comp{S}|\}$, see \cite{Mohar1989}. It should be noted that, due to its lack of a balancing term, MinCut is computable in polynomial time (see e.g.\ \cite{GawrychowskiMozesWeimann2024} for a randomized algorithm with high probability in $\BO(m\log^2 n)$ time). For applications, however, MinCut is of limited relevance since it tends to separate a single vertex from the remainder of the graph \cite{vonLuxburg2007}. Thus, all practically relevant graph cuts become non-computable even for problems of moderate sizes. This also remains the case for many relaxed versions \cite{WagnerWagner1993} and approximations of graph cuts up to a constant factor \cite{BuiJones1992}. \citet{Sherman2009} proposed an algorithm to approximate Ratio Cut on an unweighted graph up to a $\BO(\sqrt{\log n})$ factor in $\widetilde{\BO}(n\cdot\mathrm{polylog}(n))$ time, building upon previous results by \cite{KhandekarRaoVazirani2009,AroraKale2007}, and attain the lower bound on the approximation factor as shown by \cite{Orecchia_etal2008}. To the best of our knowledge, however, these results do not extend to other balancing terms or to weighted graphs. As mentioned in the \nameref{sec:intro}, spectral clustering represents a notable exception as it can be regarded as a relaxation of Normalized and Cheeger Cut while still being computable in $\BO(n^3)$ time for weighted graphs.

\section{The algorithms}
\label{sec:algorithms}

\noindent We first introduce the basic algorithm and then present a refined and vastly accelerated variant.

\subsection{A basic algorithm for imitating graph cuts through \texorpdfstring{$st$}{st}-min cuts}
\label{sub:xvst}

\noindent To retain the qualitative aspects of the (balanced) graph cuts themselves as much as possible we suggest to restrict the combinatorial optimization to a certain collection of partitions. Then, if such a collection of partitions is well-chosen, the original graph cut (i.e.\ the minimizer over all partitions) can be imitated on a qualitative level. For this purposes, we consider some collections of $st$-MinCut partitions, i.e.\ the cuts that separate the two nodes $s$ and $t$ in $V$ for $s,t\in V$. More precisely, an \emph{$st$-MinCut partition} $S_{st}$ is defined as 
\begin{align*}
S_{st} \; &\in\;\argmin_{S\in\mathcal{S}_{st}^*} \MC_S(G),\\
\quad\text{where}\quad\mathcal{S}_{st}^* &\coloneqq  \{S \subset V\; :\; s \in S,\ t\not\in S\}.
\end{align*}

The partition $S_{st}$ might not be unique. However, the nonuniqueness represents a fringe case that is highly unlikely to occur on real-world data; see \cref{sec:simulations}. Also notice that, by definition, $s\in S_{st}$ and $t\in\comp{S}_{st}$ for any $s,t\in V$, $s\neq t$. This property will be important later. The fastest algorithms for computing an $st$-MinCut partition take $\BO(nm)$ time for general graphs; for instance, one could use the algorithm suggested in \citet{Orlin2013} if $m = \BO(n^{1.06})$, and \citet{OrlinGong2021} otherwise, see also \cref{apdx:xist_complexity_proof}.

Our idea is to restrict the graph cut minimization to the $st$-MinCut partitions. The reason for choosing this particular subset of partitions is twofold: First, it is computable in polynomial time as outlined above, and second, an $st$-MinCut partition is forced to separate two clusters if $s$ and $t$ belong to different clusters that are more tightly connected than the edges connecting them. In this scenario, the MinCut value of cutting between the two clusters is lower than cutting off parts of any one of the clusters as the cut still needs to separate $s$ and $t$. This requirement of the two clusters being more intraconnected than interconnected is precisely how one would define a cluster, so for proper choice of vertices $s$ and $t$ one expects the $st$-MinCut partition to be \enquote{reasonable} (i.e.\ in that it separates two clusters). A visualization of this can be seen in \cref{img:xist_explained} where one could consider the orange vertices to be good choices for $s$ and $t$ -- this is elaborated upon in \cref{sub:xist}.

A first version of our algorithm can be stated as follows:

\begin{algorithm}
\label{alg:xvst}
\SetAlgorithmName{Basic Xvst algorithm}{Xvst}
\SetAlgoLined
\DontPrintSemicolon
\SetKwInOut{Input}{input}
\SetKwInOut{Output}{output}
\Input{weighted graph $G=(V,E,\mat{W})$}
\Output{XC value $c_{\min}$ with associated partition $S_{\min}$}
\BlankLine

set $c_{\min}\leftarrow\infty$ and $S_{\min}\leftarrow\emptyset$\;

\For{$\{s,t\}\subset V$ with $s\neq t$\label{alg:xvst:while_loop}}{
    compute an $st$-MinCut partition $S_{st}$ on $G$ \label{alg:xvst:st-mincut}\;
    compute the graph cut value $\XC_{\comp{S}_{st}}(G)$ of partition $S_{st}$ \label{alg:xvst:xcut}\;
	\If{$\XC_{S_{st}}(G) < c_{\min}$}{
        $c_{\min} \leftarrow \XC_{S_{st}}(G)$\;
        $S_{\min} \leftarrow S_{st}$\;
    }
}\label{alg:xvst:while_end}
\caption{$\XC(G)$ via $st$-MinCuts}
\end{algorithm}

As the computation of $st$-MinCut in line~\ref{alg:xvst:st-mincut} of our \xvstnameref is in $\BO(nm)$ time, the complexity of this algorithm is $\BO(n^3 m)$ (\cref{lem:xist_complexity}). While this is not particularly fast, the design of the algorithm guarantees that the resulting partition is a cut that is reasonable in the sense that it separates two vertices $s$ and $t$ through the $st$-MinCut partition $S_{st}$ while also taking cluster size into account via the balancing term in the graph cut value $\XC_{S_{st}}({G})$. This imitates the nature of (the NP-complete problem of) computing graph cuts from a qualitative perspective, whereas techniques such as spectral clustering approximate the corresponding functionals by convex relaxations. More precisely, the partition $S_{\minn}$ that the \xvstnameref outputs is guaranteed to be an $st$-MinCut for some $s,t\in V$, whereas the partition returned by spectral clustering does not possess any inherent qualitative feature per se.

In the literature, the attention has mainly focused on the set $\mathcal{S}^*_{st}$ of $st$-MinCut partitions for a fixed pair of $s,t\in V$, $s\neq t$. For instance, the cardinality of $\mathcal{S}^*_{st}$ has been used as a structure characterization on the {crossing minimization problem} in graph planarizations \cite{ChimaniGutwengerMutzel2007}. \citet{AndersenLang2008} proposed an algorithm subroutine to improve existing partitions that is based on $st$-MinCuts; their method, however, introduces artificial vertices $s$ and $t$ to act as penalization for changing the existing partition, in difference to our algorithm which considers only \enquote{real} vertices $s,t\in\Vloc$. It should be noted that \citet{Bonsma2010} showed that the problem of finding the most balanced partition in $\mathcal{S}^*_{st}$ is NP-hard, so that we consider only one $st$-MinCut partition $S_{st}$ for a given pair of $s,t\in V$, and employ instead the collection of such partitions for all pairs of $s,t\in V$, namely, $\{S_{st} \colon s,t \in V, s\neq t\}$. In this way, we preserve the intrinsic structure of the graph to a large extend while gaining efficient computation in polynomial time.

\subsection{The proposed \texorpdfstring{\xistnameref}{Xist algorithm}}
\label{sub:xist}

\begin{figure*}[!t]
    \centering
    \subfloat[All \emph{local maxima} $1$, $2$, $3$, and $4$.\\ Initially, $\tau = (1,1,1,1)$.]{
			\centering
			\begin{tikzpicture}[x=8mm,y=8mm,rotate=180]
			\coordinate (O) (0,0);
            \node[inner sep=0pt] (A) at (1.2,6.1) {A};
            \node[inner sep=0pt] (B) at (2.0,1.3) {B};
            \node[inner sep=0pt] (C) at (2.7,2.1) {C};
            \node[inner sep=0pt] (D) at (1.5,2.4) {D};
            \node[inner sep=0pt] (E) at (2.6,1.2) {E};
            \node[inner sep=0pt] (F) at (2.2,4.8) {F};
            \node[inner sep=0pt] (G) at (4.2,2.3) {G};
            \node[inner sep=0pt] (H) at (2.7,3.1) {H};
            \node[inner sep=0pt] (I) at (6.6,2.2) {I};
            \node[inner sep=0pt] (J) at (3.2,4.3) {J};
            \node[inner sep=0pt] (K) at (5.5,1.9) {K};
            \node[inner sep=0pt] (L) at (7.2,3.5) {L};
            \node[inner sep=0pt] (M) at (7.8,5.0) {M};
            \node[inner sep=0pt] (N) at (8.7,2.5) {N};
            \node[inner sep=0pt] (O) at (8.4,6.4) {O};
            \node[inner sep=0pt] (P) at (8.2,1.4) {P};
            \node[inner sep=0pt] (Q) at (7.7,2.2) {Q};
            \node[inner sep=0pt] (R) at (8.2,3.4) {R};
            \node[inner sep=0pt] (S) at (6.1,4.7) {S};
            \node[inner sep=0pt] (T) at (6.2,6.2) {T};
            \node[inner sep=0pt] (U) at (4.1,4.6) {U};
            \node[inner sep=0pt] (V) at (5.0,3.6) {V};
            \node[inner sep=0pt] (W) at (3.6,5.7) {W};
            \node[inner sep=0pt] (X) at (7.3,5.8) {X};

            \draw[black!50, line width=0.72mm] (A) -- (F) -- (J) -- (W) -- (U) -- (V) node[pos=0.5, inner sep=0pt, outer sep=0pt] (UV) {} -- (S) node[pos=0.5, inner sep=0pt, outer sep=0pt] (SV) {} -- (X) -- (T) -- (X) -- (O) -- (M) -- (X);
            \draw[black!50, line width=0.72mm] (L) -- (Q) -- (R) -- (N) -- (P) -- (Q) -- (I) -- (K) -- (G) node[pos=0.5, inner sep=0pt, outer sep=0pt] (KG) {} -- (C) -- (E) -- (B) -- (C) -- (D);
            \draw[black!50, line width=0.72mm] (B) -- (D) -- (H) -- (C);
            \draw[black!50, line width=0.72mm] (G) -- (V);
            \draw[black!50, line width=0.72mm] (L) -- (I) -- (P);
            \draw[black!50, line width=0.72mm] (S) -- (T);
            \draw[black!50, line width=0.72mm] (N) -- (Q);

            \draw[black!50, line width=0.9mm] (H) -- (D) -- (C) -- (H) -- (J) node[pos=0.5, inner sep=0pt, outer sep=0pt] (JH) {} -- (U) -- (W) -- (F) -- (J);
            \draw[black!50, line width=0.9mm] (S) -- (M) -- (X) -- (S) -- (V);
            \draw[black!50, line width=0.9mm] (I) -- (Q) -- (P) -- (N) -- (Q);
            
            \draw[black!50, line width=1.08mm] (S) -- (L) node[pos=0.5, inner sep=0pt, outer sep=0pt] (SL) {} -- (Q) -- (R) -- (L);
            \draw[black!50, line width=1.08mm] (C) -- (G);
            \draw[black!50, line width=1.08mm] (I) -- (K);
            \draw[black!50, line width=1.08mm] (C) -- (E) -- (B) -- (C);
            
            \foreach \i in {A,B,...,X}{
                \filldraw (\i) circle (6pt);
            }
            \filldraw[orange] (C) circle (6pt) node[white] {\textbf{3}};
            \filldraw[orange] (J) circle (6pt) node[white] {\textbf{4}};
            \filldraw[orange] (Q) circle (6pt) node[white] {\textbf{2}};
            \filldraw[orange] (S) circle (6pt) node[white] {\textbf{1}};
			\end{tikzpicture}}%
%
		\subfloat[$21$-MinCut with $\RC\approx 0.021$.\\ Updated $\tau = (1,2,1,1)$]{
			\centering
			\begin{tikzpicture}[x=8mm,y=8mm,rotate=180]
			\coordinate (O) (0,0);
            \node[inner sep=0pt] (A) at (1.2,6.1) {A};
            \node[inner sep=0pt] (B) at (2.0,1.3) {B};
            \node[inner sep=0pt] (C) at (2.7,2.1) {C};
            \node[inner sep=0pt] (D) at (1.5,2.4) {D};
            \node[inner sep=0pt] (E) at (2.6,1.2) {E};
            \node[inner sep=0pt] (F) at (2.2,4.8) {F};
            \node[inner sep=0pt] (G) at (4.2,2.3) {G};
            \node[inner sep=0pt] (H) at (2.7,3.1) {H};
            \node[inner sep=0pt] (I) at (6.6,2.2) {I};
            \node[inner sep=0pt] (J) at (3.2,4.3) {J};
            \node[inner sep=0pt] (K) at (5.5,1.9) {K};
            \node[inner sep=0pt] (L) at (7.2,3.5) {L};
            \node[inner sep=0pt] (M) at (7.8,5.0) {M};
            \node[inner sep=0pt] (N) at (8.7,2.5) {N};
            \node[inner sep=0pt] (O) at (8.4,6.4) {O};
            \node[inner sep=0pt] (P) at (8.2,1.4) {P};
            \node[inner sep=0pt] (Q) at (7.7,2.2) {Q};
            \node[inner sep=0pt] (R) at (8.2,3.4) {R};
            \node[inner sep=0pt] (S) at (6.1,4.7) {S};
            \node[inner sep=0pt] (T) at (6.2,6.2) {T};
            \node[inner sep=0pt] (U) at (4.1,4.6) {U};
            \node[inner sep=0pt] (V) at (5.0,3.6) {V};
            \node[inner sep=0pt] (W) at (3.6,5.7) {W};
            \node[inner sep=0pt] (X) at (7.3,5.8) {X};

            \draw[black!50, line width=0.72mm] (A) -- (F) -- (J) -- (W) -- (U) -- (V) node[pos=0.5, inner sep=0pt, outer sep=0pt] (UV) {} -- (S) node[pos=0.5, inner sep=0pt, outer sep=0pt] (SV) {} -- (X) -- (T) -- (X) -- (O) -- (M) -- (X);
            \draw[black!50, line width=0.72mm] (L) -- (Q) -- (R) -- (N) -- (P) -- (Q) -- (I) -- (K) -- (G) node[pos=0.5, inner sep=0pt, outer sep=0pt] (KG) {} -- (C) -- (E) -- (B) -- (C) -- (D);
            \draw[black!50, line width=0.72mm] (B) -- (D) -- (H) -- (C);
            \draw[black!50, line width=0.72mm] (G) -- (V);
            \draw[black!50, line width=0.72mm] (L) -- (I) -- (P);
            \draw[black!50, line width=0.72mm] (S) -- (T);
            \draw[black!50, line width=0.72mm] (N) -- (Q);

            \draw[black!50, line width=0.9mm] (H) -- (D) -- (C) -- (H) -- (J) node[pos=0.5, inner sep=0pt, outer sep=0pt] (JH) {} -- (U) -- (W) -- (F) -- (J);
            \draw[black!50, line width=0.9mm] (S) -- (M) -- (X) -- (S) -- (V);
            \draw[black!50, line width=0.9mm] (I) -- (Q) -- (P) -- (N) -- (Q);
            
            \draw[black!50, line width=1.08mm] (S) -- (L) node[pos=0.5, inner sep=0pt, outer sep=0pt] (SL) {} -- (Q) -- (R) -- (L);
            \draw[black!50, line width=1.08mm] (C) -- (G);
            \draw[black!50, line width=1.08mm] (I) -- (K);
            \draw[black!50, line width=1.08mm] (C) -- (E) -- (B) -- (C);
            
            \foreach \i in {A,B,...,X}{
                \filldraw[niceblue] (\i) circle (6pt);
            }
            \filldraw[nicegreen] (K) circle (6pt);
            \filldraw[nicegreen] (I) circle (6pt);
            \filldraw[nicegreen] (L) circle (6pt);
            \filldraw[nicegreen] (R) circle (6pt);
            \filldraw[nicegreen] (N) circle (6pt);
            \filldraw[nicegreen] (P) circle (6pt);
            \filldraw[nicegreen] (Q) circle (6pt);
            \filldraw[nicegreen,draw=orange1,line width=.5mm] (Q) circle (6pt) node[white] {\textbf{2}};
            \filldraw[niceblue,draw=orange1,line width=.5mm] (S) circle (6pt) node[white] {\textbf{1}};
            \draw[red,dashed,line width=1mm] ($(SL)!4mm!90:(S)$) -- ($(SL)!4mm!90:(L)$);
            \draw[red,dashed,line width=1mm] ($(KG)!4mm!90:(K)$) -- ($(KG)!4mm!90:(G)$);
			\end{tikzpicture}}%
		
		\subfloat[$31$-MinCut with $\RC\approx 0.016$.\\ Updated $\tau = (1,2,3,3)$]{
			\centering
			\begin{tikzpicture}[x=8mm,y=8mm,rotate=180]
			\coordinate (O) (0,0);
            \node[inner sep=0pt] (A) at (1.2,6.1) {A};
            \node[inner sep=0pt] (B) at (2.0,1.3) {B};
            \node[inner sep=0pt] (C) at (2.7,2.1) {C};
            \node[inner sep=0pt] (D) at (1.5,2.4) {D};
            \node[inner sep=0pt] (E) at (2.6,1.2) {E};
            \node[inner sep=0pt] (F) at (2.2,4.8) {F};
            \node[inner sep=0pt] (G) at (4.2,2.3) {G};
            \node[inner sep=0pt] (H) at (2.7,3.1) {H};
            \node[inner sep=0pt] (I) at (6.6,2.2) {I};
            \node[inner sep=0pt] (J) at (3.2,4.3) {J};
            \node[inner sep=0pt] (K) at (5.5,1.9) {K};
            \node[inner sep=0pt] (L) at (7.2,3.5) {L};
            \node[inner sep=0pt] (M) at (7.8,5.0) {M};
            \node[inner sep=0pt] (N) at (8.7,2.5) {N};
            \node[inner sep=0pt] (O) at (8.4,6.4) {O};
            \node[inner sep=0pt] (P) at (8.2,1.4) {P};
            \node[inner sep=0pt] (Q) at (7.7,2.2) {Q};
            \node[inner sep=0pt] (R) at (8.2,3.4) {R};
            \node[inner sep=0pt] (S) at (6.1,4.7) {S};
            \node[inner sep=0pt] (T) at (6.2,6.2) {T};
            \node[inner sep=0pt] (U) at (4.1,4.6) {U};
            \node[inner sep=0pt] (V) at (5.0,3.6) {V};
            \node[inner sep=0pt] (W) at (3.6,5.7) {W};
            \node[inner sep=0pt] (X) at (7.3,5.8) {X};

            \draw[black!50, line width=0.72mm] (A) -- (F) -- (J) -- (W) -- (U) -- (V) node[pos=0.5, inner sep=0pt, outer sep=0pt] (UV) {} -- (S) node[pos=0.5, inner sep=0pt, outer sep=0pt] (SV) {} -- (X) -- (T) -- (X) -- (O) -- (M) -- (X);
            \draw[black!50, line width=0.72mm] (L) -- (Q) -- (R) -- (N) -- (P) -- (Q) -- (I) -- (K) -- (G) node[pos=0.5, inner sep=0pt, outer sep=0pt] (KG) {} -- (C) -- (E) -- (B) -- (C) -- (D);
            \draw[black!50, line width=0.72mm] (B) -- (D) -- (H) -- (C);
            \draw[black!50, line width=0.72mm] (G) -- (V);
            \draw[black!50, line width=0.72mm] (L) -- (I) -- (P);
            \draw[black!50, line width=0.72mm] (S) -- (T);
            \draw[black!50, line width=0.72mm] (N) -- (Q);

            \draw[black!50, line width=0.9mm] (H) -- (D) -- (C) -- (H) -- (J) node[pos=0.5, inner sep=0pt, outer sep=0pt] (JH) {} -- (U) -- (W) -- (F) -- (J);
            \draw[black!50, line width=0.9mm] (S) -- (M) -- (X) -- (S) -- (V);
            \draw[black!50, line width=0.9mm] (I) -- (Q) -- (P) -- (N) -- (Q);
            
            \draw[black!50, line width=1.08mm] (S) -- (L) node[pos=0.5, inner sep=0pt, outer sep=0pt] (SL) {} -- (Q) -- (R) -- (L);
            \draw[black!50, line width=1.08mm] (C) -- (G);
            \draw[black!50, line width=1.08mm] (I) -- (K);
            \draw[black!50, line width=1.08mm] (C) -- (E) -- (B) -- (C);
            
            \foreach \i in {A,B,...,X}{
                \filldraw[niceblue] (\i) circle (6pt);
            }
            \filldraw[nicegreen] (A) circle (6pt);
            \filldraw[nicegreen] (F) circle (6pt);
            \filldraw[nicegreen] (W) circle (6pt);
            \filldraw[nicegreen] (U) circle (6pt);
            \filldraw[nicegreen] (V) circle (6pt);
            \filldraw[nicegreen] (G) circle (6pt);
            \filldraw[nicegreen] (H) circle (6pt);
            \filldraw[nicegreen] (D) circle (6pt);
            \filldraw[nicegreen] (B) circle (6pt);
            \filldraw[nicegreen] (E) circle (6pt);
            \filldraw[nicegreen] (J) circle (6pt);
            \filldraw[nicegreen,draw=orange1,line width=.5mm] (C) circle (6pt) node[white] {\textbf{3}};
            \filldraw[niceblue,draw=orange1,line width=.5mm] (S) circle (6pt) node[white] {\textbf{1}};
            \draw[red,dashed,line width=1mm] ($(SV)!4mm!90:(S)$) -- ($(SV)!4mm!90:(V)$);
            \draw[red,dashed,line width=1mm] ($(KG)!4mm!90:(K)$) -- ($(KG)!4mm!90:(G)$);
			\end{tikzpicture}}%
%
		\subfloat[$43$-MinCut with $\RC\approx 0.024$.\\ Final $\tau = (1,2,3,4)$]{
			\centering
			\begin{tikzpicture}[x=8mm,y=8mm,rotate=180]
			\coordinate (O) (0,0);
            \node[inner sep=0pt] (A) at (1.2,6.1) {A};
            \node[inner sep=0pt] (B) at (2.0,1.3) {B};
            \node[inner sep=0pt] (C) at (2.7,2.1) {C};
            \node[inner sep=0pt] (D) at (1.5,2.4) {D};
            \node[inner sep=0pt] (E) at (2.6,1.2) {E};
            \node[inner sep=0pt] (F) at (2.2,4.8) {F};
            \node[inner sep=0pt] (G) at (4.2,2.3) {G};
            \node[inner sep=0pt] (H) at (2.7,3.1) {H};
            \node[inner sep=0pt] (I) at (6.6,2.2) {I};
            \node[inner sep=0pt] (J) at (3.2,4.3) {J};
            \node[inner sep=0pt] (K) at (5.5,1.9) {K};
            \node[inner sep=0pt] (L) at (7.2,3.5) {L};
            \node[inner sep=0pt] (M) at (7.8,5.0) {M};
            \node[inner sep=0pt] (N) at (8.7,2.5) {N};
            \node[inner sep=0pt] (O) at (8.4,6.4) {O};
            \node[inner sep=0pt] (P) at (8.2,1.4) {P};
            \node[inner sep=0pt] (Q) at (7.7,2.2) {Q};
            \node[inner sep=0pt] (R) at (8.2,3.4) {R};
            \node[inner sep=0pt] (S) at (6.1,4.7) {S};
            \node[inner sep=0pt] (T) at (6.2,6.2) {T};
            \node[inner sep=0pt] (U) at (4.1,4.6) {U};
            \node[inner sep=0pt] (V) at (5.0,3.6) {V};
            \node[inner sep=0pt] (W) at (3.6,5.7) {W};
            \node[inner sep=0pt] (X) at (7.3,5.8) {X};

            \draw[black!50, line width=0.72mm] (A) -- (F) -- (J) -- (W) -- (U) -- (V) node[pos=0.5, inner sep=0pt, outer sep=0pt] (UV) {} -- (S) node[pos=0.5, inner sep=0pt, outer sep=0pt] (SV) {} -- (X) -- (T) -- (X) -- (O) -- (M) -- (X);
            \draw[black!50, line width=0.72mm] (L) -- (Q) -- (R) -- (N) -- (P) -- (Q) -- (I) -- (K) -- (G) node[pos=0.5, inner sep=0pt, outer sep=0pt] (KG) {} -- (C) -- (E) -- (B) -- (C) -- (D);
            \draw[black!50, line width=0.72mm] (B) -- (D) -- (H) -- (C);
            \draw[black!50, line width=0.72mm] (G) -- (V);
            \draw[black!50, line width=0.72mm] (L) -- (I) -- (P);
            \draw[black!50, line width=0.72mm] (S) -- (T);
            \draw[black!50, line width=0.72mm] (N) -- (Q);

            \draw[black!50, line width=0.9mm] (H) -- (D) -- (C) -- (H) -- (J) node[pos=0.5, inner sep=0pt, outer sep=0pt] (JH) {} -- (U) -- (W) -- (F) -- (J);
            \draw[black!50, line width=0.9mm] (S) -- (M) -- (X) -- (S) -- (V);
            \draw[black!50, line width=0.9mm] (I) -- (Q) -- (P) -- (N) -- (Q);
            
            \draw[black!50, line width=1.08mm] (S) -- (L) node[pos=0.5, inner sep=0pt, outer sep=0pt] (SL) {} -- (Q) -- (R) -- (L);
            \draw[black!50, line width=1.08mm] (C) -- (G);
            \draw[black!50, line width=1.08mm] (I) -- (K);
            \draw[black!50, line width=1.08mm] (C) -- (E) -- (B) -- (C);
            
            \foreach \i in {A,B,...,X}{
                \filldraw[niceblue] (\i) circle (6pt);
            }
            \filldraw[nicegreen] (F) circle (6pt);
            \filldraw[nicegreen] (A) circle (6pt);
            \filldraw[nicegreen] (W) circle (6pt);
            \filldraw[nicegreen] (U) circle (6pt);
            \filldraw[niceblue,draw=orange1,line width=.5mm] (C) circle (6pt) node[white] {\textbf{3}};
            \filldraw[nicegreen,draw=orange1,line width=.5mm] (J) circle (6pt) node[white] {\textbf{4}};
            \draw[red,dashed,line width=1mm] ($(UV)!4mm!90:(U)$) -- ($(UV)!4mm!90:(V)$);
            \draw[red,dashed,line width=1mm] ($(JH)!4mm!90:(J)$) -- ($(JH)!4mm!90:(H)$);
			\end{tikzpicture}}%
    \caption{Illustration of the \xistnameref for the Ratio Cut functional on a weighted toy graph, where edge thickness is proportional to edge weight. The vector $\tau$ in the \xistnameref determines the vertices $s$ and $t$ for the next $st$-MinCut. (a)~depicts the set of local maxima $\Vloc=\{1,2,3,4\}$. \xistref computes first the $21$-MinCut and updates $\tau$ in (b), then the $31$-MinCut with another update to $\tau$ in (c). Finally, the $43$-MinCut is computed, and $\tau$ is not updated further. Since the $31$-MinCut in (c) gives the best Ratio Cut value among all three partitions, \xistref outputs this partition and value.}
    \label{img:xist_explained}
\end{figure*}

\noindent We improve the \xvstnameref by two techniques, see \cref{img:xist_explained} for an illustration. 

First, we further restrict the minimization of the graph cut functional by only considering certain vertices $s$ and $t$ to compute the $st$-MinCuts over. In practice, $st$-MinCut partitions only become viable if one forces $s$ and $t$ to belong to tightly connected clusters to avoid a partition that cuts out only one vertex (this may happen as MinCut does not have a balancing term to counteract this). If $s$ and $t$ are connected to their respective neighbouring nodes through high-weighted edges, the $st$-MinCut cannot simply cut out one or the other and is therefore forced to find a different, more balanced way to separate both vertices. Formally, we call a vertex $u\in V$ a \emph{local maximum} if $\deg(u)\geq\deg(v)$ for all $v\in V$ with $\{u,v\}\in E$. We denote the set of local maxima as
\[
\Vloc\coloneqq \bigl\{u\in V\mid \deg(u)\geq \deg(v)\text{ for all } v\in V\text{ with }\{u,v\}\in E\bigr\},
\]
with its cardinality $\abs{\Vloc}=:N$. For a visualization of $\Vloc$ see \cref{img:xist_explained}~(a). By definition of $\Vloc$, we can ensure the scenario described above by requiring both $s$ and $t$ to be local maxima.

Second, it is not necessary to iterate over all pairs $\{s,t\}\subset\Vloc$ to obtain all $st$-MinCuts of vertices in $\Vloc$. It is known that in a graph of $n$ vertices, there are at most $n-1$ distinct $st$-MinCuts, and that these can be computed through the construction of the so-called \emph{Gomory-Hu tree} that was introduced by \cite{GomoryHu1961}. The Gomory-Hu tree is a tree built on $V$ where the edge weights are $st$-MinCut values, $s,t\in V$. \citet{GomoryHu1961} showed that this tree can be constructed through vertex contraction and only $n-1$ $st$-MinCut computations, and that it encapsulates all $st$-MinCut values. Consequently, there are only $n-1$ $st$-MinCuts, meaning that it is possible to improve the complexity of the \xvstnameref by $\BO(n)$. Additionally, their proofs can be adapted for the case that only those $st$-MinCuts are of interest where $s,t\in A$ for any subset $A\subseteq V$. We present this in \cref{apdx:sub:xist_consistency_proof}. 

Expanding upon this classical result, \citet{Gusfield1990} showed that alternatively to the Gomory-Hu method of vertex contraction and tree construction, it is possible to compute all $st$-MinCuts directly on the (uncontracted) graph $G$. Consequently, \citet[Section 3.4]{Gusfield1990} presented an adaptation of the Gomory-Hu algorithm that is simpler to implement and runs on the original graph only.

We incorporate the two techniques (restriction to local maxima and Gomory-Hu tree vertex selection) into the \xvstnameref to obtain the final \xistnameref (short for \textbf{X}C \textbf{i}mitation through $\mathbf{st}$-MinCuts; pronounced like \enquote{exist}).

\begin{algorithm}[!t]
\label{alg:xist}\label{alg:xvst_loc_max_merge}
\SetAlgorithmName{Xist algorithm}{Xist}
\SetAlgoLined
\DontPrintSemicolon
\SetKwInOut{Input}{input}
\SetKwInOut{Output}{output}
\SetKw{KwTerminate}{terminate}
\Input{weighted graph $G=(V,E,\mat{W})$}
\Output{XC value $c_{\min}$ with associated partition $S_{\min}$}
\BlankLine

set $c_{\min}\leftarrow\infty$ and $S_{\min}\leftarrow\emptyset$\;
determine the set of local maxima $\Vloc\subseteq V$\label{alg:xist:vloc}\;
\lIf{$N\coloneqq \abs{\Vloc} = 1$}{\KwTerminate \label{alg:xist:cs}}
set $\tau\leftarrow (1,\ldots,1)\in\R^N$\;
\For{$i\in\{2,\ldots,N\}$\label{alg:xist:forloop}}{
    let $s$ denote the $i$-th, and $t$ the $\tau_i$-th vertex in $\Vloc$\;
    compute an $st$-MinCut partition $S_{st}$ on $G$ \label{alg:xist:st-mincut}\;
    \tcp*{Note that by definition $s\in S_{st}$ and $t\in\comp{S}_{st}$}
    compute the XCut value $\XC_{S_{st}}(G)$ of partition $S_{st}$\label{alg:xist:xcut}\;
	\If{$\XC_{S_{st}}(G) < c_{\min}$}{\label{alg:xist:ifmin}
        $c_{\min} \leftarrow \XC_{S_{st}}(G)$\;
        $S_{\min} \leftarrow S_{st}$\;
	}\label{alg:xist:ifmin_end}
	\For{$j\in\{i,\ldots,N\}$}{
	    let $v_j$ be the $j$-th vertex in $\Vloc$\;
	    \lIf{$v_j\in S_{st}$ and $\tau_j = \tau_i$}{$\tau_j\leftarrow i$}
	}
}
\caption{XCut imitation through $st$-MinCuts on local maxima via implicit Gomory-Hu trees}
\end{algorithm}

Note that the set $\Vloc$ could be substituted by any subset of vertices $A\subseteq V$, and \xistref would still output the best XCut among $st$-MinCuts for all pairs of vertices $s,t\in A$, $s\neq t$. There are several reasons for choosing the set $\Vloc$ specifically:

\begin{enumerate}[label=(\roman*)]
    \item By only considering local maxima the $st$-MinCut is forced to separate $s$ and $t$ and therefore has to cut through the presumed \enquote{valley} (i.e.\ set of vertices with low degree) that lies between $s$ and $t$.
    \item Vertices that are no local maxima are not likely to benefit from $st$-MinCuts. This is due to the lack of a balancing term as previously discussed; MinCut tends to cut out only one vertex, e.g.\ the vertex of a low degree compared to its neighbours.
    This will, however, not happen as often with vertices of high degree as the latter punishes one-vertex cuts, by yielding a (comparatively) large cut value. See \cref{img:xist_explained}, for example.
    \item As a subset of $V$, the set $\Vloc$ greatly reduces the number of $st$-MinCuts to compute. The precise extend of its influence is difficult to quantify and heavily depends on the graph itself. In practice, only considering local maxima can lead to a vastly improved runtime (cf.\ \cref{img:xvst_comparison} later).
\end{enumerate}

The number $N$ of local maxima in a graph $G$ depends heavily on the vertex degrees and edge weights. Clearly, $N$ can be bounded from above by the independence number $\alpha(G)$ of the graph, several upper bounds of which are available in the literature \citep{William2011}. One example is the following upper bound \cite[Theorem~3.2]{SuilShiTaoqiu2021}:
\[
1\leq N\leq \alpha(G)\leq \frac{\Delta(G) n}{\Delta(G) + \delta(G)},
\]
where $\Delta(G)$ and $\delta(G)$ are the maximum and minimum number of neighbours of a vertex in $G$, respectively. This is not sharp in general, and is dominated by more sophisticated bounds, which, however, are more difficult to compute; in fact, the graph independence number $\alpha(G)$ is NP-hard to compute itself \citep{GareyJohnson1979}.

\subsection{Theoretical properties}
\label{sub:thp}

\noindent One important advantage of incorporating the Gomory-Hu method is that \xistref is guaranteed to optimize the cut value over $N-1$ \emph{distinct} partitions, thus removing redundant computations. This reduces runtime significantly (cf.\ \cref{img:xvst_comparison} later).

\begin{assumptions}
There exist $s,t\in V$, $s\neq t$, with a unique $st$-MinCut partition $S_{st}$ such that
\[
\XC_{S_{st}}(G)\; = \;\min_{u,v, S_{u,v}} \XC_{S_{uv}}(G),
\]
where the minimum is taken over all $u,v\in V$, $u\neq v$, and \emph{all} partitions $S_{uv}$ that attain the respective $uv$-MinCuts.
\label{assumptions:uniqueness}
\end{assumptions}

\cref{assumptions:uniqueness} is the technical condition necessary to guarantee that the \xvstnameref and in particular \xistref yield a consistent output regardless of the procedure chosen to compute the $st$-MinCut. The main issue is uniqueness of the underlying $st$-MinCuts: There could be two distinct partitions that both attain the $st$-MinCut, but yield a different XCut value. Even if we were not to rely on an oracle to compute the $st$-MinCut partition, problems could still arise as no polynomial algorithm can compute all $st$-MinCut partitions for fixed $s,t\in V$, $s\neq t$ \citep{Bonsma2010}, so it is not possible to efficiently determine the XCut minimizing among all $st$-MinCut attaining partitions, and it is often not clear what partition a given $st$-MinCut algorithm will output, given the existence of two partitions with the same MinCut value, but different XCut values.

Consequently, from a technical standpoint, a version of \cref{assumptions:uniqueness} is necessary for any algorithm that uses $st$-MinCut partitions and not just the cut value itself. The condition itself is fairly weak, especially in practice. For instance, if all $st$-MinCut partitions are unique (i.e.\ for any $s,t\in V$, $s\neq t$), \cref{assumptions:uniqueness} is satisfied, so global uniqueness would be a stronger restriction. In practice, this condition will almost always be satisfied, especially for image data such as the examples in \cref{sec:simulations}, or more generally for graphs with \enquote{suitably different} weights.

Using \cref{assumptions:uniqueness}, we can more clearly characterize the output of \xistref.

\begin{theorem}
    \xistref outputs $\min_{s,t\in\Vloc} \XC_{S_{st}}(G)$. Further, if \cref{assumptions:uniqueness} holds with $s,t\in\Vloc$, \xistref and the \xvstnameref yield the same output. If additionally the optimal XCut partition constitutes the st-MinCut for some $s,t\in\Vloc$, Xist outputs the optimal XCut partition.
\label{lem:xvst_loc_max_merge_equivalence}
\end{theorem}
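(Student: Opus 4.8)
The plan is to establish the three assertions in sequence, each building on the previous one. For the first assertion — that \xistref outputs $\min_{s,t\in\Vloc}\XC_{S_{st}}(G)$ — I would argue that the inner \texttt{for} loop over $j$ correctly maintains the vector $\tau$ as an encoding of the partial Gomory-Hu tree restricted to $\Vloc$, so that the $N-1$ pairs $(s,t)$ considered across the outer loop are exactly the $N-1$ edges of a Gomory-Hu tree on the vertex subset $\Vloc$. By the classical Gomory-Hu theorem (and its adaptation to a vertex subset $A\subseteq V$, which the paper defers to \cref{apdx:sub:xist_consistency_proof}), the $st$-MinCut value between \emph{any} pair $u,v\in\Vloc$ equals the minimum edge weight on the tree path between $u$ and $v$, and the corresponding tree-edge $st$-MinCut partition is itself a valid $uv$-MinCut partition. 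Hence for every pair $u,v\in\Vloc$ there is a tree edge $\{s,t\}$ with $\XC_{S_{st}}(G)\le \XC_{S_{uv}}(G)$ attainable — more precisely, one of the $N-1$ partitions $S_{st}$ the algorithm inspects separates $u$ and $v$ and is a $uv$-MinCut, so taking the minimum over the $N-1$ inspected values already dominates the minimum over all pairs. Since each inspected $S_{st}$ is in particular \emph{some} $s't'$-MinCut partition with $s',t'\in\Vloc$, the two minima coincide, giving the first claim. The \texttt{if} block in lines \ref{alg:xist:ifmin}–\ref{alg:xist:ifmin_end} then guarantees $c_{\min}$ and $S_{\min}$ record the best such pair.

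For the second assertion, suppose \cref{assumptions:uniqueness} holds with the optimal pair $s,t$ lying in $\Vloc$. The \xvstnameref ranges over \emph{all} pairs in $V$ and over all $st$-MinCut partitions implicitly (via whatever subroutine), so its output is $\min_{u,v\in V, S_{uv}}\XC_{S_{uv}}(G)$, which by the assumption equals $\XC_{S_{st}}(G)$ for the unique minimising partition $S_{st}$ with $s,t\in\Vloc$. On the other hand \xistref outputs $\min_{u,v\in\Vloc}\XC_{S_{uv}}(G)$ by the first part; since $\{s,t\}\subseteq\Vloc$ and the minimiser over $V$ is already achieved at this pair, the $\Vloc$-restricted minimum cannot be larger, and it cannot be smaller because $\Vloc\subseteq V$. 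Uniqueness of $S_{st}$ removes the ambiguity of which partition a given subroutine returns, so both algorithms return the identical partition and value.

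The third assertion is then immediate: if the globally optimal XCut partition $S^\ast$ (i.e.\ a minimiser in \cref{def:cuts}) happens to be an $st$-MinCut partition for some $s,t\in\Vloc$, then $\XC(G)=\XC_{S^\ast}(G)\ge \min_{u,v\in\Vloc}\XC_{S_{uv}}(G)\ge \XC(G)$, where the first inequality is because $S^\ast$ is among the partitions over which the $\Vloc$-restricted minimum is taken, and the second because every $S_{uv}$ is a genuine partition of $V$ and hence a feasible point for $\XC(G)$. So the chain collapses to equality and \xistref outputs $\XC(G)$ together with an optimal partition.

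I expect the main obstacle to be the rigorous justification of the first step — namely that the $\tau$-bookkeeping in the inner loop genuinely realises a Gomory-Hu tree on the subset $\Vloc$, and that Gusfield's uncontracted-graph variant combined with the subset restriction still yields the "minimum edge on the path" cut-structure property. This is essentially the content of \cref{apdx:sub:xist_consistency_proof} and requires care, since the standard Gomory-Hu argument uses vertex contraction and one must check that restricting the source/sink candidates to $\Vloc$ (rather than all of $V$) does not break the submodularity/non-crossing argument that lets contracted min-cuts be "uncrossed" back to min-cuts in $G$. Once that structural fact is in hand, the rest is the short monotonicity argument above.
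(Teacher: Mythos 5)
Your proposal is correct and follows essentially the same route as the paper: the combinatorial core (that the $\tau$-bookkeeping makes \xistref consider an $st$-MinCut partition for every pair $s,t\in\Vloc$) is deferred to \cref{apdx:sub:xist_consistency_proof} in both, and your remaining uniqueness and monotonicity arguments match the paper's proof of the second and third assertions. The only phrase to tighten is the claim that the \xvstnameref implicitly ranges over \emph{all} $st$-MinCut partitions ``via whatever subroutine'' --- the oracle returns a single partition per pair, so the identity of its output with $\min_{u,v,S_{uv}}\XC_{S_{uv}}(G)$ holds only under \cref{assumptions:uniqueness}, which is precisely the point the paper stresses and which you do invoke immediately afterwards.
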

\begin{proof}
    First, by design of \xistnameref, it computes the minimal XCut among \emph{some} $st$-MinCut partitions for \emph{all} pairs $s,t\in\Vloc$, $s\neq t$, the algorithm iterates over. This fact is shown in the \cref{apdx:sub:xist_consistency_proof}, specifically \cref{apdx:thm:xist_correctness}. This, however, already shows the first claim as \xistref considers $st$-MinCut partitions $S_{st}$ for all $s,t\in\Vloc$ and, by design, selects the one with the best XCut value among them.
    
    As for the second part of the statement, it is clear that the \xvstnameref computes $\min_{s,t\in V}\XC_{S_{st}}(G)$. Here, it is important to stress that the partition $S_{st}$ might not be the unique $st$-MinCut attaining partition. Indeed, as we treat each $st$-MinCut computation as an oracle call, it is not even clear whether in each algorithm, computing the $st$-MinCut consistently returns the same partition. Under \cref{assumptions:uniqueness}, however, there exist $s,t\in V$ (even $s,t\in\Vloc$ by assumption from the theorem statement) such that the $st$-MinCut partition $S_{st}$ is unique and it attains the best possible XCut among \emph{all possible} $uv$-MinCut partitions, for all $u,v\in V$. \cref{apdx:thm:xist_correctness} guarantees that this $st$-MinCut is considered by \xistref (and obviously also by the \xvstnameref) and, because the minimum $S_{st}$ is unique and attains the best XCut value, both algorithms yield $S_{st}$ as their output.

    The last assertion of the theorem now follows immediately.
\end{proof}

\cref{lem:xvst_loc_max_merge_equivalence} shows that the \xvstnameref and \xistref are equivalent up to restriction to the subset $\Vloc$, and in particular that it suffices to consider $N-1$ pairs of $st$-MinCut partitions to obtain all possible such cuts between pairs $s,t\in\Vloc$, $s\neq t$. In particular, this improves the worst-case complexity of \xistref (compared to the \xvstnameref) by one order of magnitude. We further obtain an approximation guarantee, i.e.\ that under the assumption that the optimal XCut partition (i.e.\ the partition attaining the minimum $\min_{S\subset V} \XC(S)$) is an $st$-MinCut for some $s,t\in\Vloc$, then \xistref outputs this partition.

The following \cref{lem:xist_complexity} shows that use of both the restriction to local maxima and the use of the Gomory-Hu method improves the runtime of \xistref significantly when compared to the \xvstnameref.

\begin{theorem}
Assume that for any fixed partition $S$, the evaluation of \;$\XC_S({G})$ takes $\BO(\kappa)$ time, with $\kappa \coloneqq \kappa(m,n)$ depending on $m$ and~$n$. Then, the computational complexity of the \xvstnameref is $\BO(n^2 \max\{nm,\kappa\})$, and that of \xistref is $\BO(N \max\{n m,\kappa\})$.
\label{lem:xist_complexity}
\end{theorem}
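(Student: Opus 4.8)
The plan is to bound the running time of each algorithm by a line-by-line accounting, using two facts already available: a single $st$-MinCut partition can be computed in $\BO(nm)$ time (\cref{sub:xvst}), and, by hypothesis, evaluating $\XC_S(G)$ for a fixed partition $S$ costs $\BO(\kappa)$.

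For the \xvstnameref I would argue as follows. The loop in line~\ref{alg:xvst:while_loop} ranges over all unordered pairs $\{s,t\}\subset V$ with $s\neq t$, of which there are $\binom{n}{2}=\BO(n^2)$. Inside one iteration, line~\ref{alg:xvst:st-mincut} costs $\BO(nm)$, line~\ref{alg:xvst:xcut} costs $\BO(\kappa)$, and the remaining work — comparing two scalars and, at most once, copying the indicator vector of $S_{st}$ into $S_{\min}$ — costs $\BO(n)$, which is dominated by $\BO(nm)$ whenever $m\ge 1$. Hence each iteration runs in $\BO(\max\{nm,\kappa\})$ time, and together with the $\BO(n)$ initialization this gives a total of $\BO(n^2\max\{nm,\kappa\})$.

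For \xistref I would first bound the preprocessing. Computing all vertex degrees from the edge list is $\BO(n+m)$, and the set $\Vloc$ in line~\ref{alg:xist:vloc} can then be extracted by one pass over $E$ (for each edge $\{u,v\}$ flag the endpoint of strictly smaller degree as a non-local-maximum), again $\BO(n+m)$; initializing $\tau$ is $\BO(N)$. All of this is dominated by $\BO(nm)$. The main loop in line~\ref{alg:xist:forloop} runs exactly $N-1$ times; in one iteration, line~\ref{alg:xist:st-mincut} costs $\BO(nm)$, line~\ref{alg:xist:xcut} costs $\BO(\kappa)$, the $\XC$-comparison and possible copy of $S_{st}$ cost $\BO(n)$, and the inner update loop over $j\in\{i,\dots,N\}$ executes at most $N$ steps, each of $\BO(1)$ cost provided $S_{st}$ is stored as an indicator vector on $V$ (so that $v_j\in S_{st}$ is tested in $\BO(1)$) and $\Vloc$ as an array (so that locating the $i$-th vertex is $\BO(1)$). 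Since $n, N\le nm$ for $m\ge 1$, each outer iteration costs $\BO(\max\{nm,\kappa\})$; multiplying by $N-1$ and adding the dominated preprocessing yields $\BO(N\max\{nm,\kappa\})$, a bound that also covers the degenerate case $N=1$ in which only the preprocessing is executed.

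There is no genuine obstacle here; the points that require care are (i) picking data structures that make the index lookups and membership tests $\BO(1)$, so that the total cost of maintaining $\tau$ over the whole run is $\BO(N^2)\subseteq\BO(Nnm)$ rather than more, and (ii) checking that every auxiliary step — determination of $\Vloc$, partition copies, $\tau$ updates — is dominated by the $\BO(\max\{nm,\kappa\})$ per-iteration cost of the $st$-MinCut computation together with the $\XC$-evaluation. The substantive reason the bound improves from $\BO(n^2\max\{nm,\kappa\})$ to $\BO(N\max\{nm,\kappa\})$, namely that the Gomory-Hu/Gusfield construction realizes all $st$-MinCuts between local maxima with only $N-1$ min-cut calls, is already supplied by \cref{lem:xvst_loc_max_merge_equivalence}; in the present proof it is invoked only to certify that the loop count is $N-1$ and not $\binom{N}{2}$.
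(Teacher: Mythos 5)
Your proposal is correct and follows essentially the same route as the paper's proof: a line-by-line accounting in which each of the $N-1$ (resp.\ $\BO(n^2)$) iterations is dominated by the $\BO(nm)$ cost of one $st$-MinCut computation plus the $\BO(\kappa)$ evaluation of $\XC_S(G)$, with all preprocessing and bookkeeping (degrees, $\Vloc$, $\tau$ updates) absorbed into these terms. The only cosmetic differences are that the paper charges $\BO(n^2)$ for computing $\Vloc$ from the weight matrix where you charge $\BO(n+m)$ from an edge list, and that the paper justifies the $\BO(nm)$ min-cut bound by explicitly case-splitting over the algorithms of Orlin, King--Rao--Tarjan, and Orlin--Gong; neither difference affects the stated bounds.
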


Note that evaluation of the functionals of popular cuts such as MinCut, Ratio Cut, NCut or Cheeger Cut are computed using only edge weights, meaning that this step is actually only $\BO(m)$, except for Ratio Cut, which also requires determining $\abs{S_{st}}$, thus yielding $\BO(m+n)$ instead. Thus, for such cuts, the \xvstnameref admits a computational complexity of $\BO(n^3m)$, while \xistref is $\BO(Nnm)$ for general graphs. Recall that spectral clustering has a worst-case complexity of $\BO(n^3)$. Thus, if $Nm=\BO(n^2)$, the \xistnameref is at least as fast as spectral clustering. If further $m = \BO(n)$ and $N \ll n$, \xistref can be much faster; this is often the case for graphs of bounded degrees, e.g.\ for an image, where its pixels constitute a regular grid. Note, however, that in the least favorable case of $N \asymp n$ and $m \asymp n^2$, \xistref can be one order slower than spectral clustering.

\subsection{Software implementation}

We have implemented the \xvstnameref, the \xistnameref and \multixistref in Python (also in R, relatively slower). Our implementations can be found on GitHub \citep{Xist_Github}, to allow for the results in the following \cref{sec:simulations} to be reproduced. We are currently in the process of implementing our algorithms in C++, from which we expect a significant increase in computational efficiency.

\section{Simulations and applications}
\label{sec:simulations}

The following simulations were performed on a laptop with Windows 11 operating system, a 2.70 GHz Intel\textsuperscript{\textregistered} Core\textsuperscript{\texttrademark} i5-12600H processor and 16 GB of RAM. The code necessary to reproduce the following (data) analysis can be found on GitHub \citep{Xist_Github}.

\subsection{Approximation of multiway cuts}
\label{sub:extension_to_multiway_cuts}

\noindent We start with an extension of \xistref to compute multiway cuts, and examine its performance on a real-world dataset. Given a graph $G$ and a number of desired partitions $k$, this extension \multixistref is obtained through a greedy approach: First, apply \xistref to $G$, receiving partitions $T_{1,1}$ and $T_{1,2}:=V\setminus T_{1,1}$, then restrict $G$ to $T_{1,1}$ (and $T_{1,2}$) to obtain $G_{1,1}$ (and $G_{1,2}$). Then cut both restricted graphs again using \xistref, and select the partition ($T_{1,1}$, say) that yields the lowest (\enquote{normalized}; see below) \xistref value. Defining $T_{2,1}$ and $T_{2,2}:=T_{1,1}\setminus T_{2,1}$, at this point $G$ is considered to be divided into three partitions: $T_{1,2}$, $T_{2,1}$ and $T_{2,2}$. The pattern continues: The graph is further restricted to $T_{2,1}$ and $T_{2,2}$, and again cut using \xistref, whereupon the lowest cut value among those and $T_{1,2}$ is selected. This iterative cutting, restricting and selecting is continued until $k$ partitions have been computed.

It should be noted that it is necessary to \enquote{normalize} the computed \xistref cut values in order to ensure comparability across graphs of (possibly) vastly different sizes (in terms of $n$, $m$ and $\mat{W}$). 
More specifically, cutting a graph whose weights have been scaled up by a constant factor should not change its normalized XCut. Hence, the balanced graph cuts are normalized by multiplying with an additional factor of $\sum_{i,j} w_{ij}/\bal(V,V)$:
\begin{align*}
\overline{\XC}_S(G) &:= \sum_{i\in S, j\in\comp{S}} \frac{w_{ij}}{\bal(S,\comp{S})} \frac{\bal(V,V)}{\sum_{i,j\in V} w_{ij}},\\
\quad\text{s.t.}\quad \overline{\XC}(G) &:= \min_{S\subset V} \overline{\XC}_S(G).
\end{align*}
Clearly, a partition $S$ minimizes $\overline{\XC}(G)$ if and only if it minimizes $\XC(G)$ (regardless of whether this minimization is done over all partitions or only over a subset), so this normalization does not impact the output of \xistref (up to the aforementioned normalizing factor of $\sum_{i,j} w_{ij}/\bal(V,V)$).

\begin{algorithm}[!t]
\label{alg:xist_iterated}
\SetAlgorithmName{Multi-Xist algorithm}{Multi-Xist}
\SetAlgoLined
\DontPrintSemicolon
\SetKwInOut{Input}{Input}
\SetKwInOut{Output}{Output}
\Input{weighted graph $G=(V,E,\mat{W})$, number $k\in\N_{\geq 2}$ of desired partitions}
\Output{$k$-way partition $R$}
\BlankLine

$S\leftarrow \bigl((\emptyset,V),(\emptyset,\emptyset),\ldots,(\emptyset,\emptyset)\bigr)\in \bigl\{(T_1,T_2)\mid T_1,T_2\subseteq V\bigr\}^k$\;
$c\leftarrow (0,\infty,\ldots,\infty)\;\in\;(-\infty, \infty]^k$\;

\For{$i = 2,\ldots,k$}{
    set $j_{\min}\leftarrow \argmin_{\ell=1,\ldots,k} c_j$ and $(T_{i,1},T_{i,2}) \leftarrow S_{j_{\min}}$\;
    \For{$\ell=1,2$}{
    	\eIf{$\abs{T_{i,\ell}} \leq 1$}{
            set $r_{i,\ell}\leftarrow \infty$ and $R_{i,\ell}\leftarrow \emptyset$\;
    	}{
    	    set $G_{i,\ell}\leftarrow (T_{i,\ell}, E\cap (T_{i,\ell}\times T_{i,\ell}), \mat{W}|_{T_{i,\ell}})$, where $\mat{W}|_{T_{i,\ell}}$ is the weight matrix of $G$ restricted to the vertices in $T_{i,\ell}$ only\;
    	    compute $r_{i,\ell} \leftarrow \xist(G_{i,\ell})$ with corresponding partition $R_{i,\ell}$ using \xistref\label{step:multiway_xvst_cut}\;
    	    $r'_{i,\ell} \leftarrow r_{i,\ell}\cdot \bal_{G_{i,\ell}}(T_{i,\ell}, T_{i,\ell}) / \bigl(\sum_{i,j\in V} w_{ij} \bigr)$
    	}
    }
	set $c_{j_{\min}} \leftarrow r'_{i,1}$ and $c_i \leftarrow r'_{i,2}$\;
	set $S_{j_{\min}} \leftarrow (R_{i,1}, T_{i,1}\setminus R_{i,1})$ and $S_i \leftarrow (R_{i,2}, T_{i,2}\setminus R_{i,2})$\;
}
$R \leftarrow \{S_{i,1}\cup S_{i,2}\}_{i=1,\ldots,k}$\; 

\caption{Multiway Xist with XCut-minimizing cluster selection}
\end{algorithm}

One could also modify the Xist algorithm to approximate multiway graph cut $\kXC$ directly by computing $\{s_1,\ldots,s_k\}$-MinCuts for given nodes $s_1,\ldots,s_k\in V$. This problem is more complex than computing $\kMC(G)$ of the entire graph $G$, even being NP-complete for general graphs, although polynomial algorithms exist if $G$ is planar and $k$ is fixed, see \cite{Dahlhaus_etal1994}. In contrast, our choice of iterative cutting has the advantage that it is computable in polynomial time for any weighted graph, more precisely, in $\BO(kNnm)$ time. Moreover, \multixistref has a built-in \enquote{optimality guarantee} in that in each iteration the best current XC imitation is selected. This, of course, comes with all advantages and disadvantages that such a greedy approach entails.

We now demonstrate the application of \multixistref to a real-world example of detecting cell clusters. The data in question consists of images of microtubules in PFA-fixed NIH 3T3 mouse embryonic fibroblasts (DSMZ: ACC59) labeled with a mouse anti-alpha-tubulin monoclonal IgG1 antibody (Thermofisher A11126, primary antibody) and visualized by a blue-fluorescent Alexa Fluor\textregistered\ 405 goat anti-mouse IgG antibody (Thermofisher A-31553, secondary antibody). Acquisition of the images was performed using a confocal microscope (Olympus IX81). The images were kindly provided by Ulrike Rölleke and Sarah Köster (University of Göttingen), and are availiable on GitHub \citep{NIH3T3_dataset}. We take on the task of identifying the $k=8$ main clusters of cells. 
To reduce computation time, we construct the graph by down-sampling the original cell image on a coarse regular grid of size $r\times r$, here for $r=128$. While the partition computed using this slight discretization does not have the same resolution of the original image (which is $512\times 512$), the grid size parameter $r$ can be chosen as large or as small as required. Edges were assigned by connecting each grid point to its eight direct neighbours, with weights defined as the product of the grey-color intensity values of connected vertices. In contrast to the usual \xistnameref, instead of line~\ref{alg:xist:vloc} we defined the set of local maxima to be pixels whose image (grey) value (instead of their degree) is larger than that of all neighbouring pixels. We make this slight modification because in images, this method better encompasses and expresses the concept of \emph{local maxima}. The results are displayed in \cref{img:kncut_cell_cluster_example}.

\begin{figure*}[htpb]
    \centering
    \subfloat[\multixistref]{\includegraphics[width=0.47\textwidth]{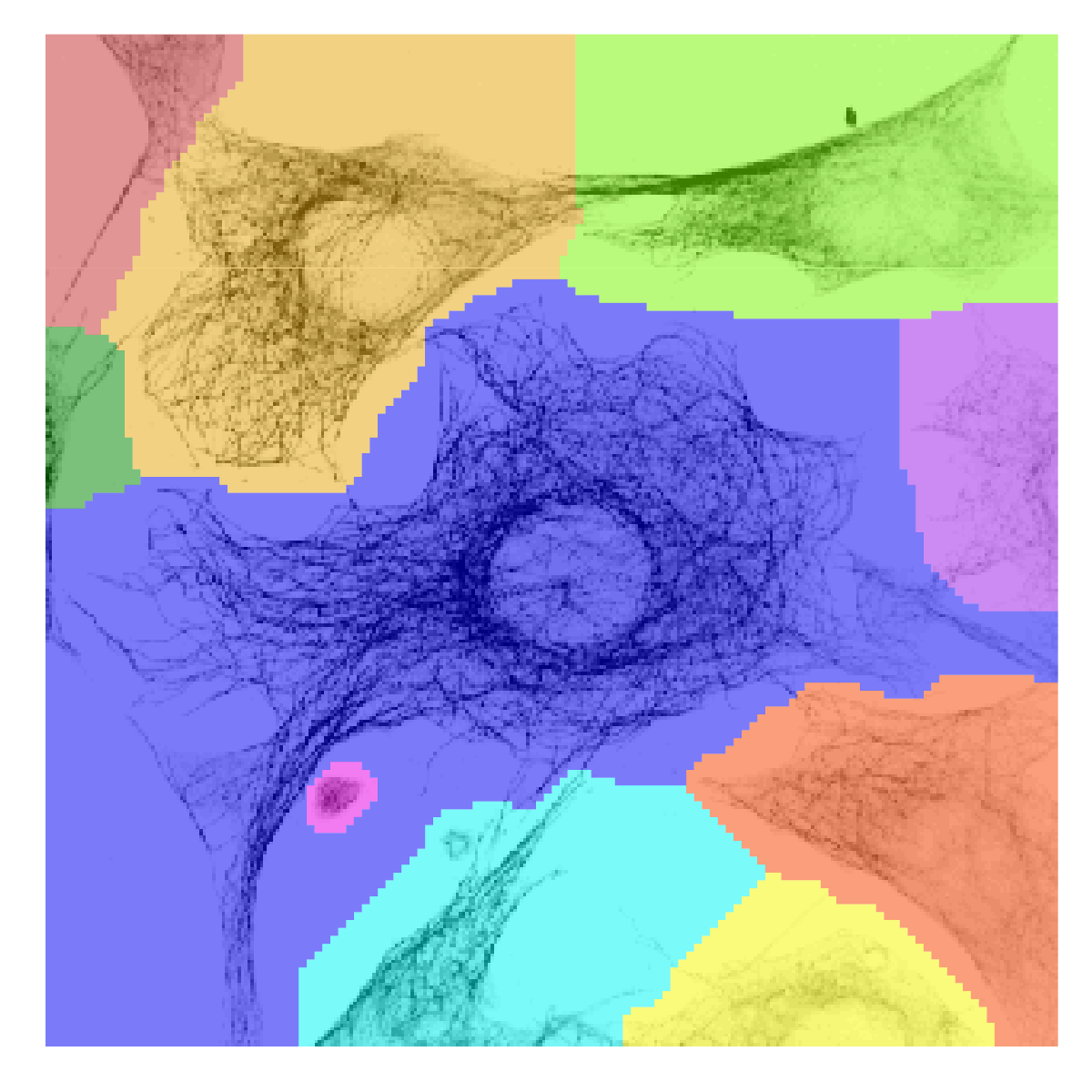}}%
    \hspace{2mm}
    \subfloat[Spectral clustering]{\includegraphics[width=.47\textwidth]{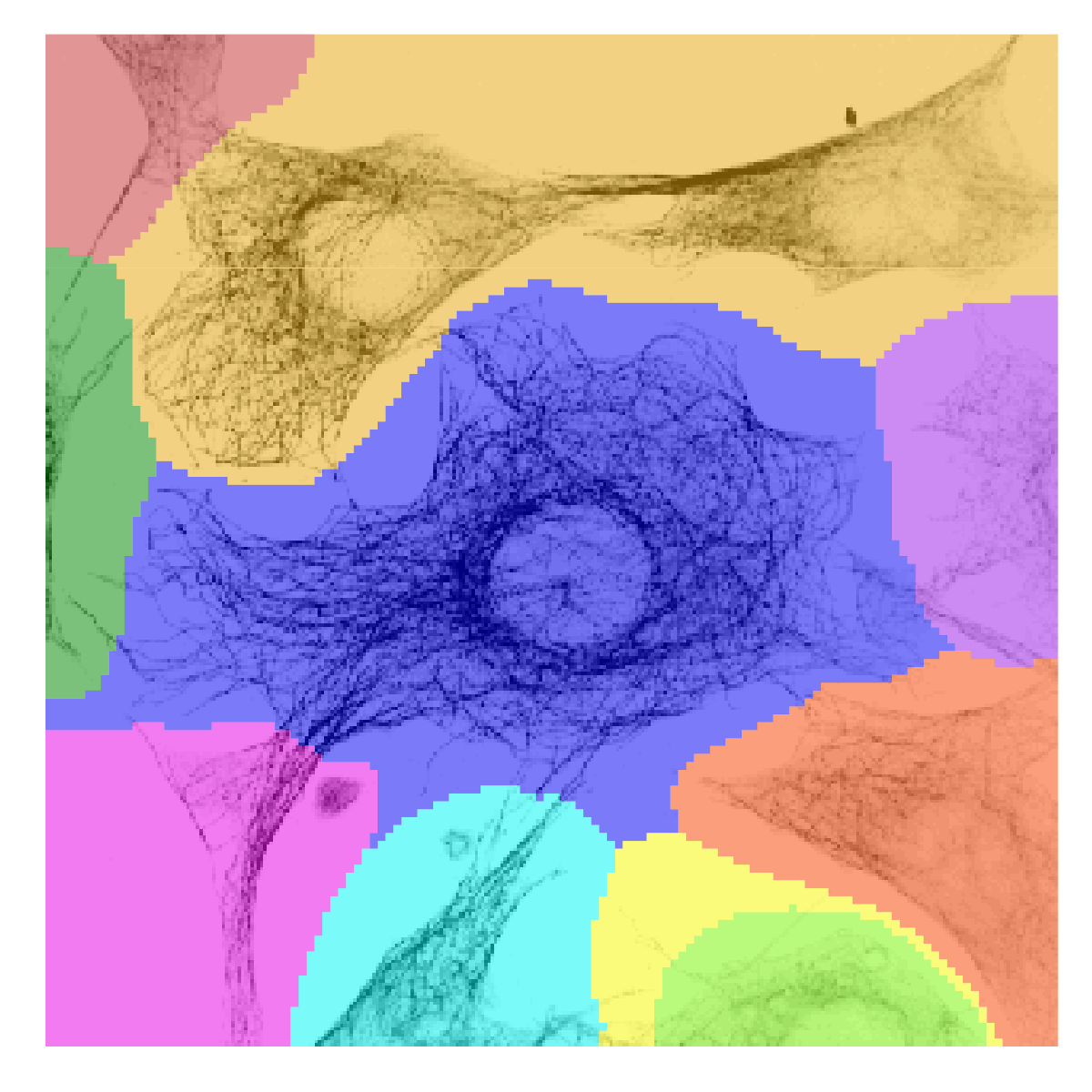}}
    \caption{Image segmentation of microtubules in NIH 3T3 cells. The colors visualize the resulting $k=10$ partitions via \multixistref and spectral clustering, respectively. The underlying cell clusters are visible in black, where a darker color marks denser microtubule network.}
    \label{img:kncut_cell_cluster_example}
\end{figure*}

We see that \multixistref does indeed yield a sensible partition, and it is also able to separate neighbouring cell clusters of different scales, correctly separating smaller, more disconnected cell clusters from the rest. This is in contrast to spectral clustering which cuts right through the main cell cluster, separating it into a red and blue part. This example exemplifies also the scalability issues spectral clustering faces when dealing with clusters of different scales \cite{NadlerGalun2007}. A visualization of the full \multixistref process on this image, in particular the selection of which cluster to cut further, is shown in \cref{img:kncut_cell_cluster_example}, where we let $k=2,\ldots,9$. This example demonstrates even more clearly the multiscale advantage \xistref has over spectral clustering can be found in \cref{apdx:img:kncut_vs_spec_clust} in \cref{apdx:sub:kncut_vs_spec_clust}.

\subsection{Empirical runtime comparison}
\label{sub:empirical_runtime_comparison}

\noindent In the following sections we compare \xistref to five state-of-the-art graph partitioning algorithms, both in terms of partition quality and runtime. The algorithms and implementation being compared are the following:

\begin{itemize}
    \item Our \xistref algorithm as implemented in {Python} (see \cite{Xist_Github}).
    \item The Leiden algorithm introduced by \cite{TraagWaltmanEck2019} and implemented in the \texttt{leidenalg} {Python} package \citep{leidenalg}.
    \item The {Python}-support of the KaHIP (Karlsruhe High Quality Partitioning) algorithm package \citep{KaHIP}, where we used the \enquote{strong social} mode of their main algorithm to achieve the highest partitioning quality.
    \item The main algorithm of the METIS graph partitioning software \citep{KarypisKumar1998} as wrapped in {Python} via the \texttt{PyMetis} package \citep{pymetis}.
    \item The Chaco algorithm \citep{HendricksonLeland1995} as a standalone software package obtained from GitHub \citep{chaco_github}, called through a shell script executed from within {Python} in order to use and record its output.
    \item The classical spectral clustering approach originally proposed by \cite{ShiMalik2000} and realized through the \texttt{scikit-learn} {Python} package \citep{scikit-learn}, where the eigenvector entries are clustered again using $k$-means clustering, see \cite[Section~4]{vonLuxburg2007}.
\end{itemize}

All of the above algorithms were used with their default parameters to ensure comparability, with the exception of the Leiden algorithm because it requires the additional input of a \enquote{resolution parameter} with no given default option. Hence, we considered the Leiden algorithm as an oracle (denoted as \enquote{Leiden (oracle)}), meaning so that the tuning parameter is chosen as to optimize the quality measure in question, i.e.\ to minimize the NCut value or maximize the classification rate (see \cref{img:4algs_classifrate+ncut}). Such choices of parameters require the knowledge of the true cluster assignments, thus referred to as \emph{oracle parameter} choices. Only when the cut quality is not evaluated, i.e.\ for runtime comparisons (see \cref{img:xvst_comparison}), we use the Leiden algorithm in its usual form (denoted as \enquote{Leiden}) with an arbitrary choice of resolution parameter.

Further note that the Leiden algorithm returns a partition of the graph into $k$ sets of vertices, where $k$ is heavily dependent upon the \enquote{resolution parameter}, so that, to evaluate the quality of the Leiden algorithm fairly, it is necessary to generalize the XCut term $\XC$ to a $k$-fold partition of $G$. This generalization is well-established in the literature, and it is given by
\begin{equation}
\XC_T(G) := \frac{1}{2}\sum_{i=1}^k \XC_{T_i}(G)\quad\text{for}\quad T = \{T_1,\ldots,T_k\}\subset V^k\text{ with }\sum_{i=1}^k T_i = V \label{eq:multiway_xcut}
\end{equation}
(and all $T_i$ are pairwise disjoint). Compared to the original definition of $\XC_S(G)$, where $S\subset V$, it is evident that $\XC_S(G) = \XC_{\{S,\comp{S}\}}(G)$.

We compare the runtime of the basic algorithm and the proposed $\xist$ algorithm with that of the above algorithms on the same cell image dataset as in \cref{sub:extension_to_multiway_cuts}. We crop the images to $504\times 504$ and vary the \enquote{resolution} of the grid (i.e.\ the grid size parameter $r$) and examine the rates at which the runtime increases, see \cref{img:xvst_comparison}. Note that we did not include the Chaco algorithm in this runtime comparison as it does not have a Python implementation (or a Python wrapper), so it had to be run as a standalone program, rendering time comparisons meaningless. To still give an intuition in terms of absolute time, Chaco's runtime was slightly above the times of KaHIP in \cref{img:xvst_comparison}.

\begin{figure}[!t]
    \centering
    \includegraphics[width=\linewidth]{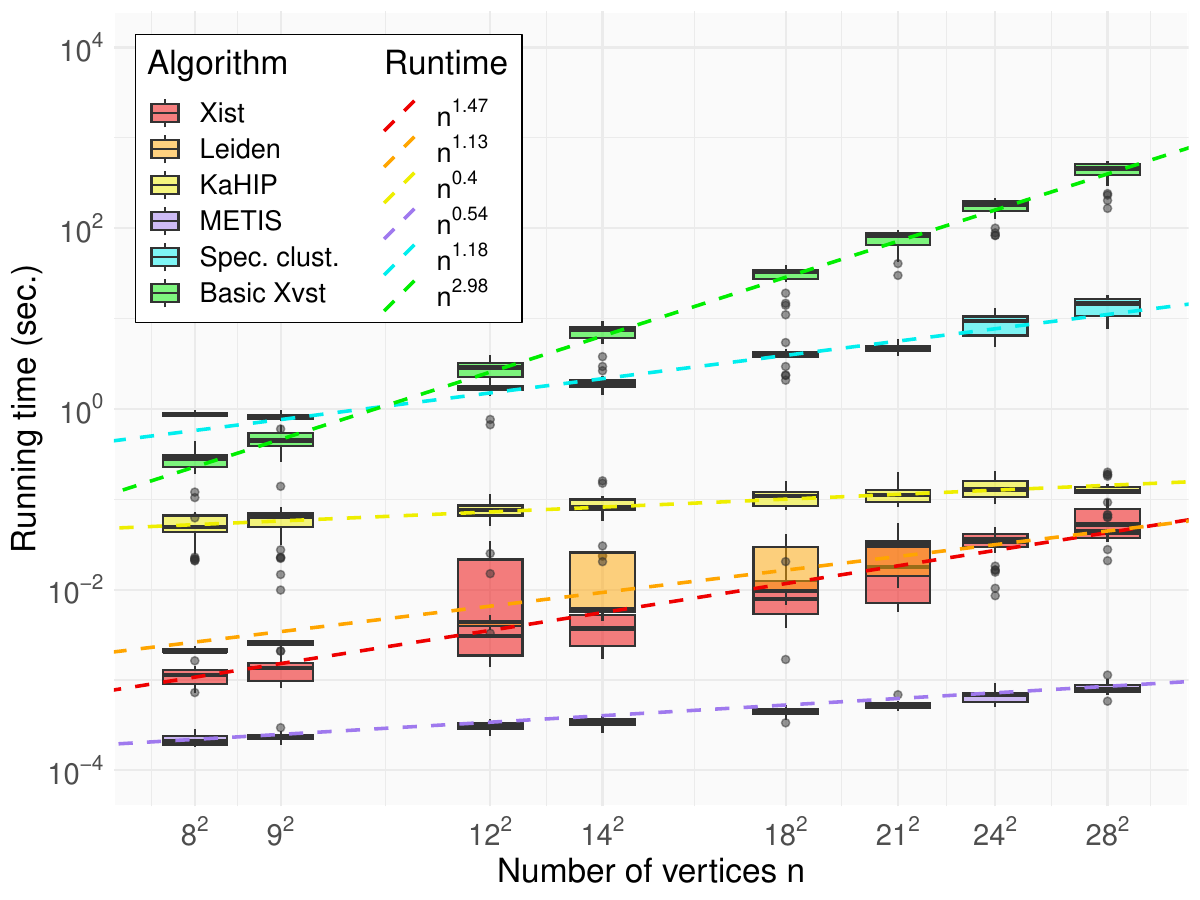}
    \caption{Comparison of the empirical runtimes (in seconds) of \xistref (red), Leiden in its non-oracle form (orange), KaHIP (yellow), METIS (violet), spectral clustering (cyan), and the \xvstnameref (green) on a $\log$-$\log$ scale against the number $n=r^2$ of vertices, with dashed lines indicating the empirical complexity of the respective algorithms. The boxplots were obtained by applying each algorithm to a dataset of 21 NIH3T3 cell cluster images of size $504\times 504$ pixels each after discretizing it onto a $r\times r$ grid, for $r\in\{8,9,12,14,18,21,24,28\}$.}
    \label{img:xvst_comparison}
\end{figure}

As is shown, the proposed \xistnameref is empirically roughly 1.5 orders of magnitude faster than the \xvstnameref. Moreover, in terms of absolute time it is much faster than even very efficient algorithms such as KaHIP or spectral clustering which are implemented in {C}, even though \xistref is only implemented in {Python} and the fact that for the computation of the $st$-MinCuts, a theoretically suboptimal algorithm had to be used as -- to the best of our knowledge -- there does not yet exist any implementation of the current fastest ones, namely \cite{Orlin2013} and \cite{OrlinGong2021}. We are currently in the process of implementing \xistref efficiently (i.e.\ in {C++}). Finally, note that the near-constant time of the Chaco algorithm in \cref{img:xvst_comparison} should be considered with an appropriate amount of scepticism since it is the only algorithm without a {Python} implementation and thus needs to be called via a shell script as described above.

\subsection{Qualitative assessment of \texorpdfstring{\xistref}{Xist}}
\label{sub:xvstlocmax_quality}

\noindent Since the aim of our algorithm is to imitate graph cuts (while still being computable in polynomial time), we evaluate the quality of our algorithm in a partitioning exercise where the task is to determine which points from a two-cluster sample stem from which cluster. To this end let $\delta>0$ and consider a random mixture of Gaussians:
\begin{equation}
\vec{X}\coloneqq (X_1,\ldots,X_n)\sim B Z_0 + (1-B) Z_{\delta},
\label{eq:gaussian_mixture}
\end{equation}
where $B\sim\mathrm{Ber}({1}/{2})$ a Bernoulli random variable, and $Z_{\delta}\sim \mathcal{N}_2(\vec{\delta}, \mat{I}_2)$ the two-dimensional standard Gaussian around $\vec{\delta}=(\delta,\delta)\in\R^2$. For details on the graph construction see \cref{img:4algs_classifrate+ncut}.

\begin{figure*}[!t]
    \centering
    \subfloat[NCut value]{\includegraphics[width=.47\textwidth]{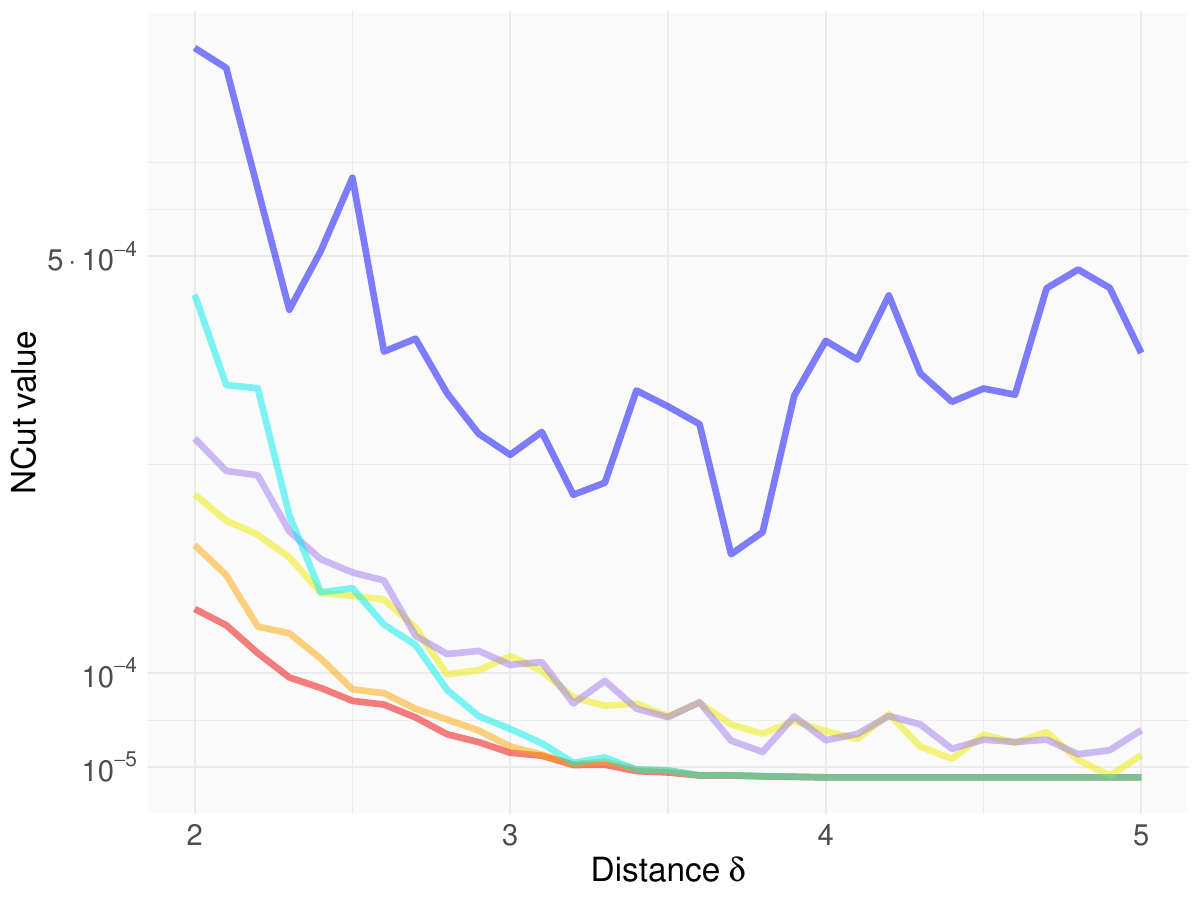}}
    \hspace{5mm}
    \subfloat[Classification rate]{\includegraphics[width=.47\textwidth]{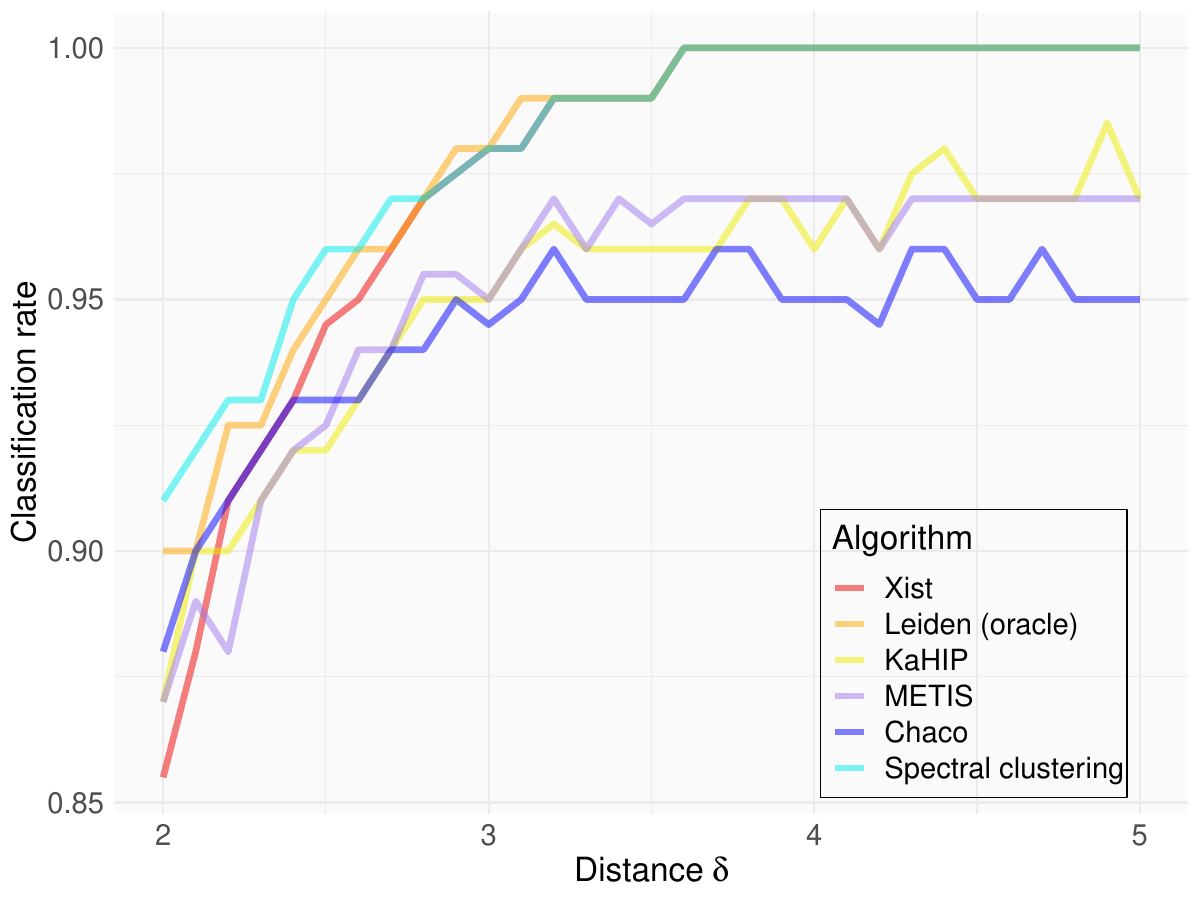}}
    \caption{Comparison of the classification rate and NCut value of \xistref (red), the oracle version of the Leiden algorithm (orange), KaHIP (yellow), METIS (violet), Chaco (blue), spectral clustering (cyan), and the \xvstnameref (green) over the intercluster distance $\delta$. The graph being partitioned is built from $n=100$ samples from the Gaussian mixture distribution in \eqref{eq:gaussian_mixture} using a combination of 5-nearest and 0.2-neighbourhood, i.e.\ by defining $\{u,v\}\in E$ if and only if $ u\in\mathrm{NN}_5(v)$ or $v\in\mathrm{NN}_5(u)$ or $\norm{u-v}\leq 0.2$, where $\mathrm{NN}_5(u)$ denotes the set of five nearest (in terms of the euclidean norm $\norm{\cdot}$ on $\R^2$) neighbours in $V$ of $u\in V$. The weights are defined as $w_{uv}\coloneqq \exp(-\norm{u-v}/0.2)$. These choices are made to ensure a connected graph (5-nearest neighbours) where clusters are more easily detectable (0.2-neighbourhood) and the spatial structure is pronounced (exponential weights). The curves depicted where obtained by taking the mean classification rate and mean NCut value over $100$ iterations of the above procedure, re-generating $\vec{X}$ each time.}
    \label{img:4algs_classifrate+ncut}
\end{figure*}

We compare the partition quality of \xistref to that of the other algorithms (see \cref{sub:empirical_runtime_comparison}). Note that \citet{LoefflerZhangZhou2021} have shown that in this scenario (i.e.\ random mixture of Gaussians), spectral clustering is asymptotically minimax optimal in terms of the classification rate (i.e.\ the ratio of observations $X_i$ correctly classified as belonging to their respective Gaussian in the mixture), so this scenario is highly favourable towards spectral clustering. Despite this, \xistref is able to achieve a similar accuracy to the oracle version of the Leiden algorithm, being only slightly worse than spectral clustering and much better than the other state-of-the-art algorithms as \cref{img:4algs_classifrate+ncut}~(a) shows, where the classification rate is plotted over the intercluster distance $\delta$. Note that, as stated before, Leiden was realized as an oracle, i.e.\ it outputs the highest classification rate in (a) and lowest NCut value in (b) over a range of its so-called \enquote{resolution parameter}. Likely due to this realization as an oracle it performs almost as good as spectral clustering, but only marginally better than \xistref, and only for small $\delta$. \xistref consistently outperforms the other state-of-the-art algorithms. The difference is most striking in \cref{img:4algs_classifrate+ncut}~(b), where \xistref yields a consistently better NCut value than all the other algorithms, thus achieving its set goal of approximating graph cuts better than spectral clustering in particular.

This simulation study additionally refutes the notion that graph-based methods (such as KaHIP, METIS or Chaco) can not achieve a good performance in the case of a Gaussian mixture \emph{by design} because they introduce an additional abstraction (the graph). The success of \xistref in this scenario shows that the poor performance of other graph-based methods is not inherent to the graph structure, but likely the result of computational artifacts. Indeed, as \cref{img:4algs_classifrate+ncut} demonstrates, if one is able to compute graph cuts highly accurately, as good as or even better results than spectral clustering can be achieved.

\subsection{Clustering large network datasets}

\noindent To demonstrate the applicability and versatility of \xistref, even in its current {Python} implementation, we take on the task of clustering large network datasets. We consider the \enquote{Stanford Large Dataset Collection} (SNAP, see \cite{SNAP}) which contains many real-world network graphs of varying types and sizes. Of those graphs we consider several undirected graphs, all unweighted, in particular
\begin{itemize}
    \item the arXiv High-Energy Physics -- Phenomenology (HEP-PH) dataset \citep{CA/CIT-HEPH1,CA/CIT-HEPH2} where authors represent nodes and edges papers co-authored between them;
    \item parts of the Multi-Scale Attributed Node Embedding (MUSAE) dataset \citep{MUSAE}, namely
    \begin{itemize}
        \item the MUSAE Facebook subset, where nodes are verified Facebook pages and edges mutual links between them;
        \item the MUSAE Squirrel subset, where nodes are squirrel-related Wiki-pedia pages and edges mutual links between them;
    \end{itemize}
    \item the Artist subset of the GEMSEC Facebook dataset \citep{GEMSEC}, consisting of nodes representing verified Facebook pages categorized as \enquote{artist} and edges mutual likes between them;
    \item and the Enron email dataset \citep{EMAIL-ENRON2}, containing Enron email addresses as nodes and edges between nodes if an email was sent from one node to the other;
\end{itemize}
In all datasets, node features were ignored as the algorithms in question are not designed to account for such information. Also note that the selection of subsets of these datasets was essentially arbitrary; similar results hold for other subsets (e.g.\ the MUSAE crocodile subset).

On the above datasets, we compare the performance of \xistref and the KaHIP, METIS and Chaco algorithms, both in terms of the Normalized Cut value of the partition as well as the algorithm runtime. The result can be found in \cref{tbl:xist_vs_leiden_large_datasets}. Note that spectral clustering cannot be reasonably applied here since it requires computation of the eigenvalues of the graph Laplacian, an $n\times n$ matrix, a task that is very memory-intensive. We also did not apply the Leiden algorithm here since the search for an NCut-minimizing \enquote{resolution parameter} would be very time-consuming for datasets as large as the above, and even then reporting the time accurately would be difficult since the runtime (and output) of any such oracle heavily depends on the range the \enquote{resolution parameter} is chosen from.

\begin{table}[htpb]
    \centering
    \caption{Normalized Cut value and empirical runtime of \xistref as well as KaHIP, METIS and Chaco for various datasets. Here, $n$ and $m$ are the number of vertices and edges of only the largest connected component of the respective dataset.}
    \label{tbl:xist_vs_leiden_large_datasets}
    \begin{tabular}{crrrrrr}
    \toprule
    \multicolumn{3}{c}{} & \multicolumn{4}{c}{NCut value $(\cdot 10^{-8})$} \\
    \cmidrule(lr){4-7}
        Dataset & $n$ & $m$ & $\xist$ & KaHIP & METIS & Chaco  \\
        \midrule
        MUSAE Squirrel & 5201 & 198353 & $\mathbf{26.08}$ & $45.76$ & $67.33$ & $43.98$ \\
        arXiv HepPh & 11204 & 117619 & $\mathbf{1.01}$ & $45.73$ & $51.13$ & $55.74$ \\
        MUSAE Facebook & 22470 & 170823 & $\mathbf{3.01}$ & $17.23$ & $18.66$ & $20.91$ \\
        Enron Email & 33696 & 180811 & $\mathbf{1.25}$ & $67.70$ & $53.06$ & $66.91$ \\
        GEMSEC Artist & 50515 & 819090 & $\mathbf{4.70}$ & $13.56$ & $13.87$ & $14.98$ \\
    \bottomrule
    \end{tabular}%
    \vspace{4mm}
    \begin{tabular}{crrrrrr}
    \toprule
    \multicolumn{3}{c}{} & \multicolumn{4}{c}{Time (seconds)} \\
    \cmidrule(lr){4-7}
        Dataset & $n$ & $m$ & $\xist$ & KaHIP & METIS & Chaco \\
        \midrule
        MUSAE Squirrel & 5201 & 198353 & $0.38$ & $1.26$ & $\mathbf{0.04}$ & $0.09$ \\
        arXiv HepPh & 11204 & 117619 & $0.91$ & $0.76$ & $\mathbf{0.03}$ & $0.04$ \\
        MUSAE Facebook & 22470 & 170823 & $8.95$ & $1.10$ & $\mathbf{0.05}$ & $0.07$ \\
        Enron Email & 33696 & 180811 & $4.67$ & $3.86$ & $\mathbf{0.06}$ & $0.36$ \\
        GEMSEC Artist & 50515 & 819090 & $10.71$ & $20.44$ & $\mathbf{0.27}$ & $0.32$ \\
    \bottomrule
    \end{tabular}
\end{table}

Note that we restricted the original datasets to only their largest respective components since, as previously mentioned, \xistref would immediately separate unconnected components and return an XCut value of $0$ (for any XCut balancing term). It is evident from \cref{tbl:xist_vs_leiden_large_datasets} that \xistref is able to handle large datasets, taking a moderate amount of time to cluster all the datasets without being implemented in a lower-level programming language such as {C} or {C++} -- unlike METIS, KaHIP or Chaco which have efficient implementations already (see \cref{sub:empirical_runtime_comparison}). Despite this, \xistref has a better runtime scaling than KaHIP, and we expect a {C++} implementation of it to be competitive in terms of runtime.

Qualitatively, one can see that \xistref decisively outperforms all three state-of-the-art algorithms, showing that it achieves its goal of approximating balanced graph cuts (here: normalized cut) better than the its competitors.

\section{Conclusion and discussion}
\label{sec:outlook}

\noindent In this article we have proposed a novel algorithm \xistref for balanced graph cuts, which preservers the qualitative (multiscale) feature of the original graph cut and meanwhile allows fast computation even for large scale datasets, in particular for sparse graphs. This is achieved by combining the combinatorial nature of $st$-MinCuts with the acceleration techniques based on restriction to vertices of locally maximal degrees and vertex merging. 

We have demonstrated the applicability and versatility of our algorithm on (cell) image segmentation, simulated data as well as large network datasets. The desirable performance of the proposed \xistref, also seen from our simulations, benefits greatly from the structure of $\Vloc$, induced by locally maximizing the degree, and its ordering, which respects the intrinsic geometric structure of the data, to a large extent. We stress, however, that the theoretical findings of \cref{lem:xvst_loc_max_merge_equivalence}, in particular its guarantee to consider $\abs{\Vloc}-1$ distinct partitions, and \cref{lem:xist_complexity} are applicable to any subset of $V$. This means that \xistref can be easily adapted to other interesting subsets of vertices, making the algorithm even more flexible.

\section*{Acknowledgments}
\noindent LS is supported by the DFG (German Research Foundation) under project GRK 2088: \enquote{Discovering Structure in Complex Data}, subproject A1. HL is funded and AM is supported by the DFG under Germany’s Excellence Strategy, project EXC 2067: \enquote{Multiscale Bioimaging: from Molecular Machines to Networks of Excitable Cells} (MBExC). AM and HL are supported by DFG CRC 1456 \enquote{Mathematics of Experiment}, and AM is supported by DFG RU 5381 \enquote{Mathematical Statistics in the Information Age -- Statistical Efficiency and Computational Tractability}, subproject \enquote{Sublinear time methods with statistical guarantees}. The authors would like to especially thank Max Wardetzky (University of Göttingen) for helpful discussions, Ulrike Rölleke and Sarah Köster (University of Göttingen) for providing the cell image dataset, and Florin Manea (University of Göttingen) for pointing out some references.

\printbibliography

\appendix
\section{Appendix}
\label{appendix}

\subsection{Further comparison study}
\label{apdx:sub:kncut_vs_spec_clust}

\cref{apdx:img:kncut_vs_spec_clust} is an extension of \cref{img:kncut_cell_cluster_example}, comparing the \multixistref algorithm and spectral clustering on the same cell image example, for different $k\in\{2,\ldots,9\}$. In particular, this exemplifies the selection process of \multixistref, namely which subgraph to cut.

\begin{figure*}[htpb]
    \begin{multicols}{2}
        \centering
        \subfloat[\multixistref, $k=2$]{\includegraphics[width=0.24\textwidth]{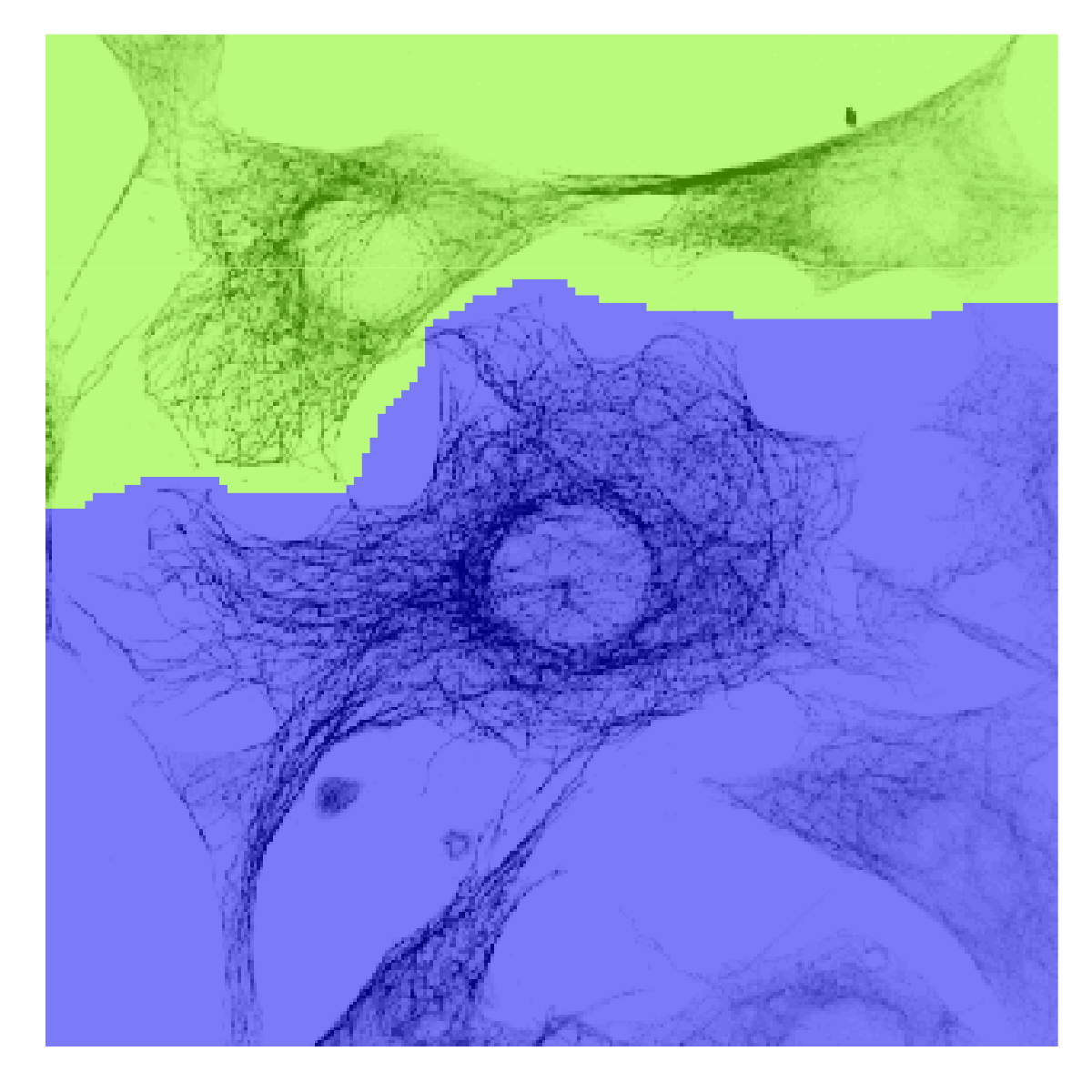}}%
        \subfloat[Spec.\ clust., $k=2$]{\includegraphics[width=.24\textwidth]{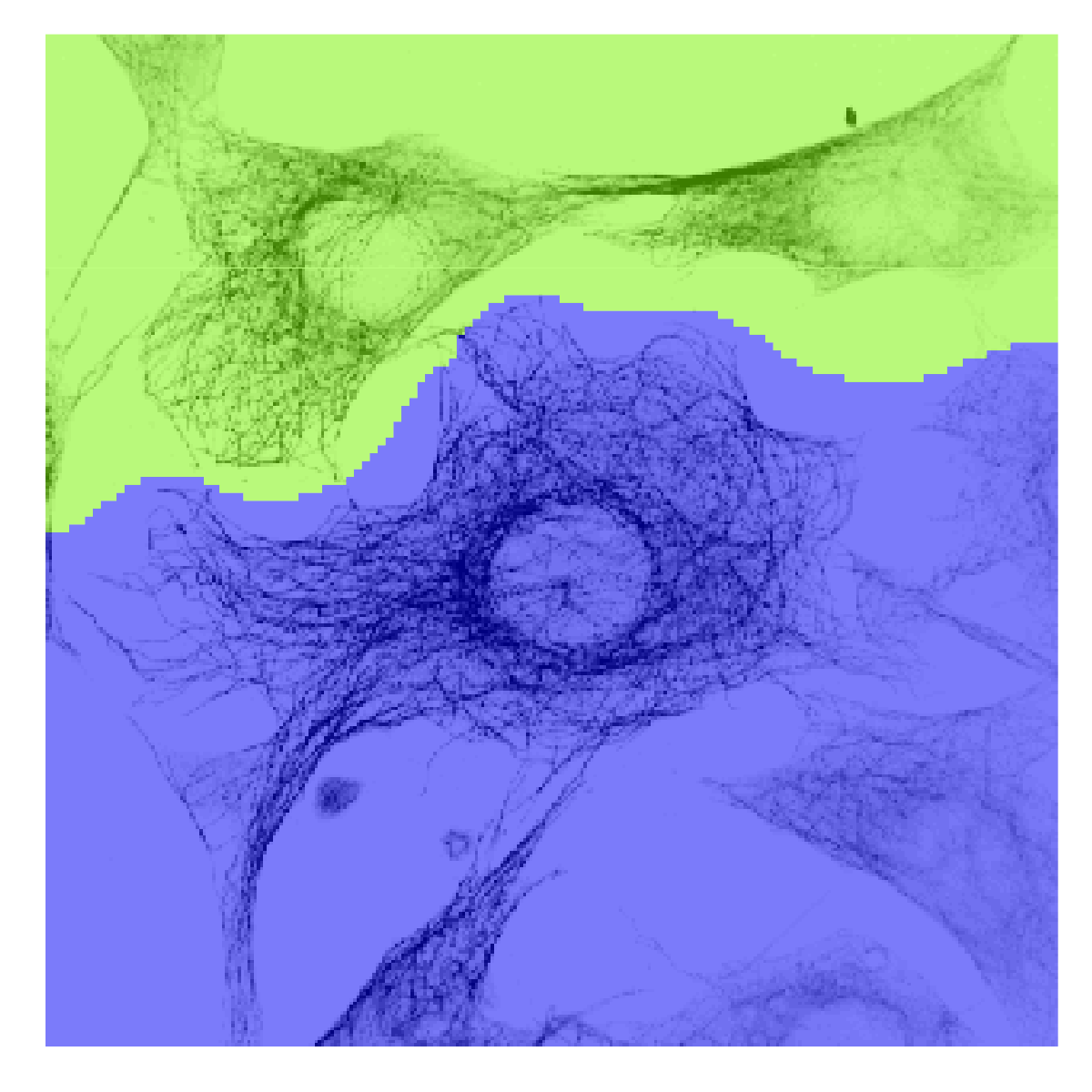}}
        \par
        \subfloat[\multixistref, $k=3$]{\includegraphics[width=0.24\textwidth]{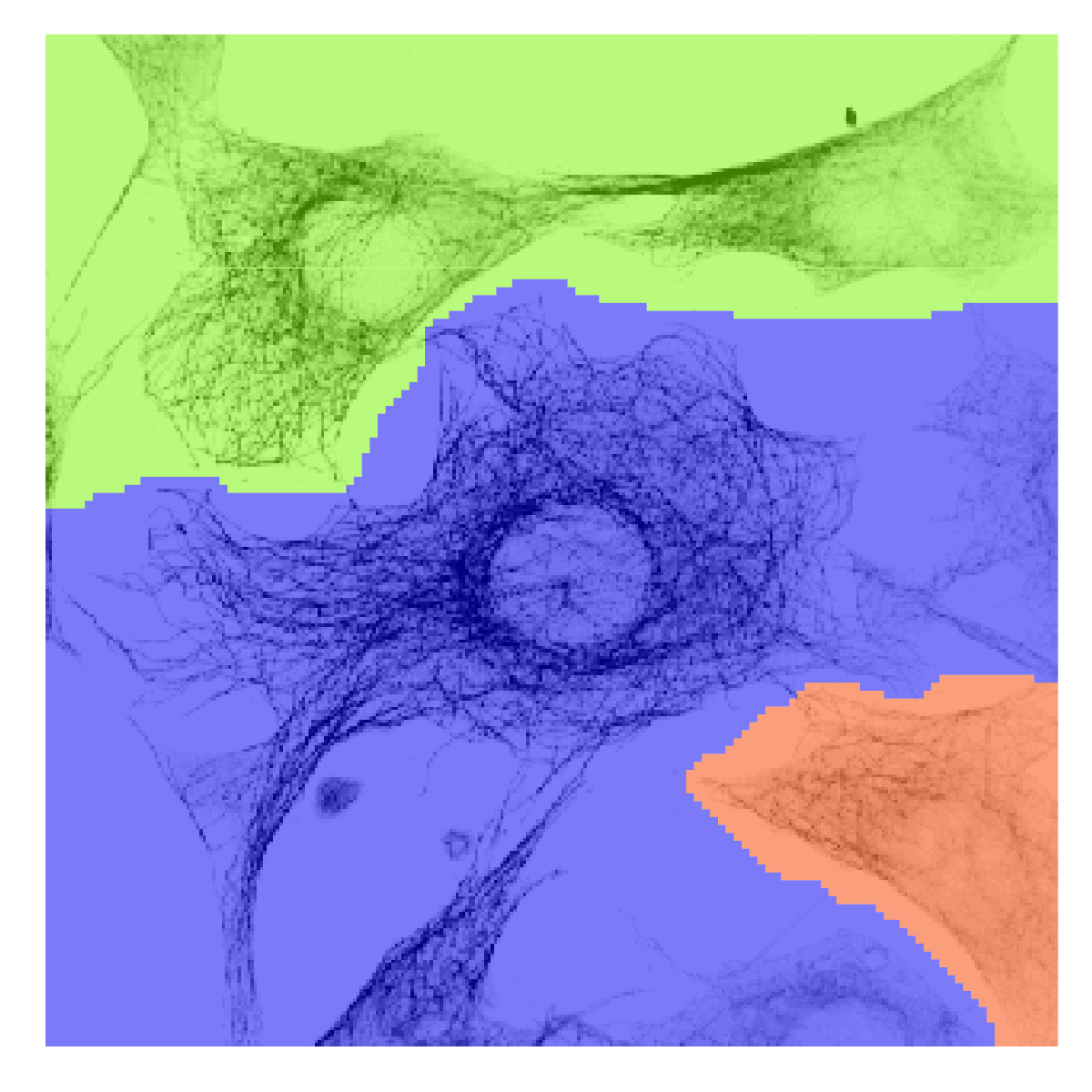}}%
        \subfloat[Spec.\ clust., $k=3$]{\includegraphics[width=.24\textwidth]{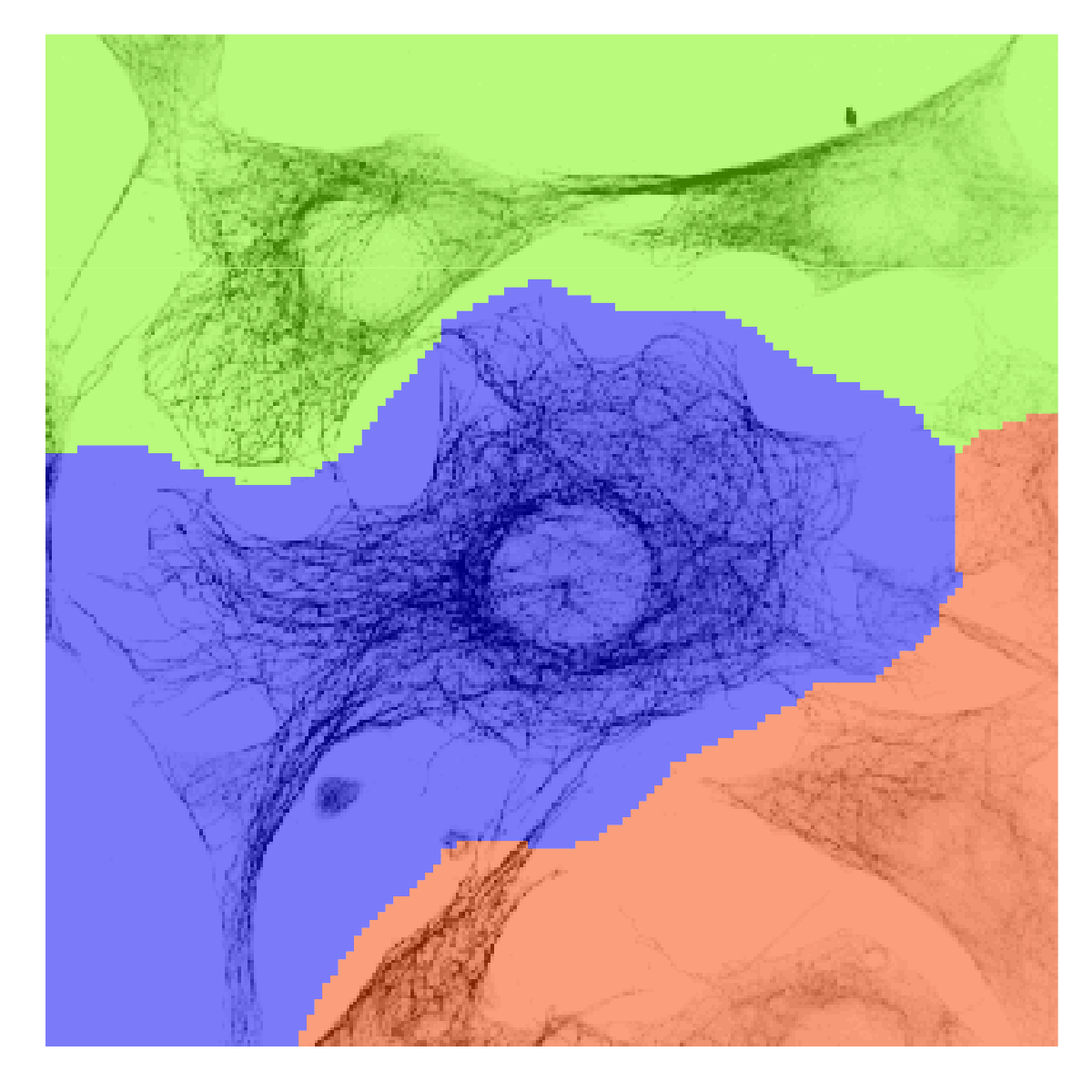}}
        \par
        \subfloat[\multixistref, $k=4$]{\includegraphics[width=0.24\textwidth]{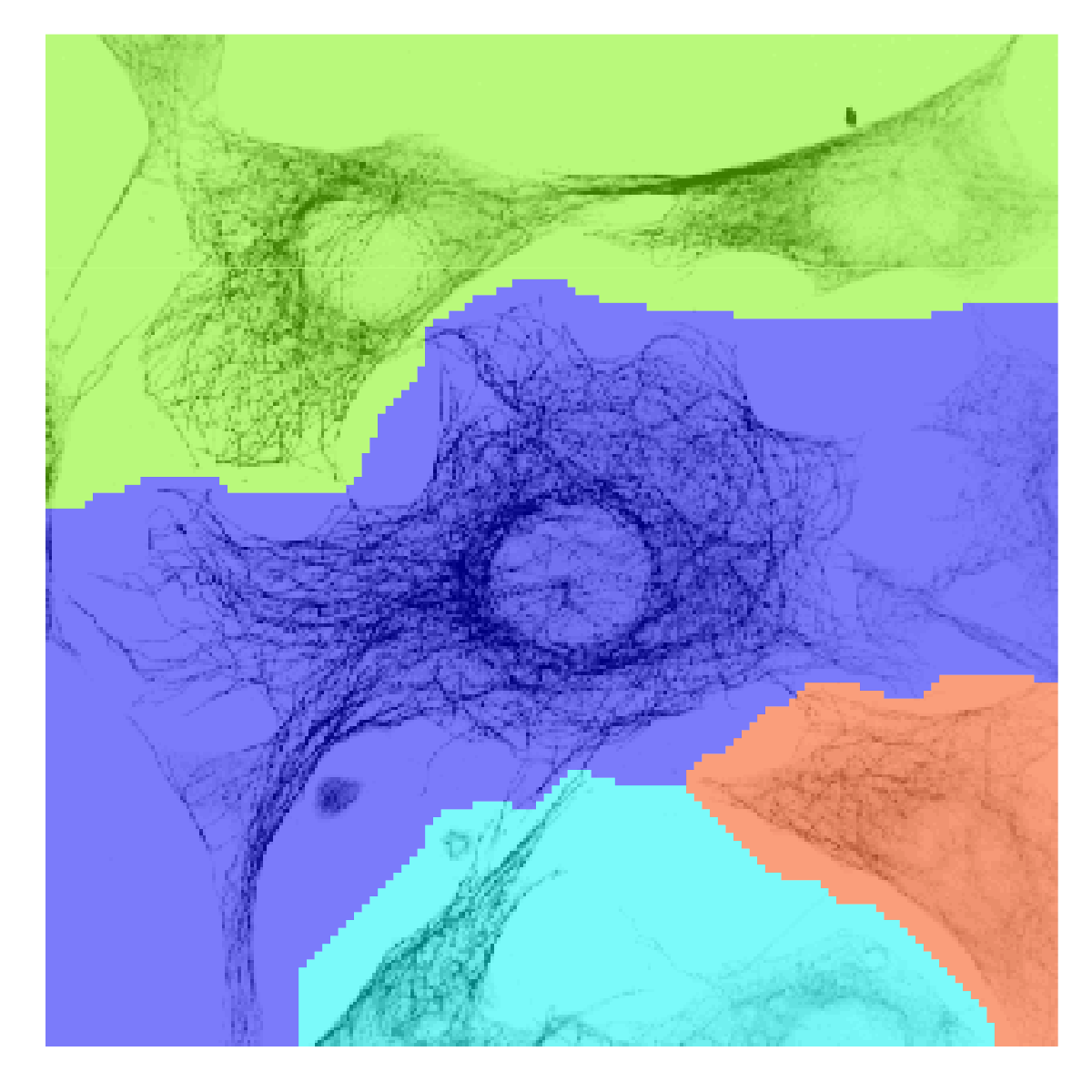}}%
        \subfloat[Spec.\ clust., $k=4$]{\includegraphics[width=.24\textwidth]{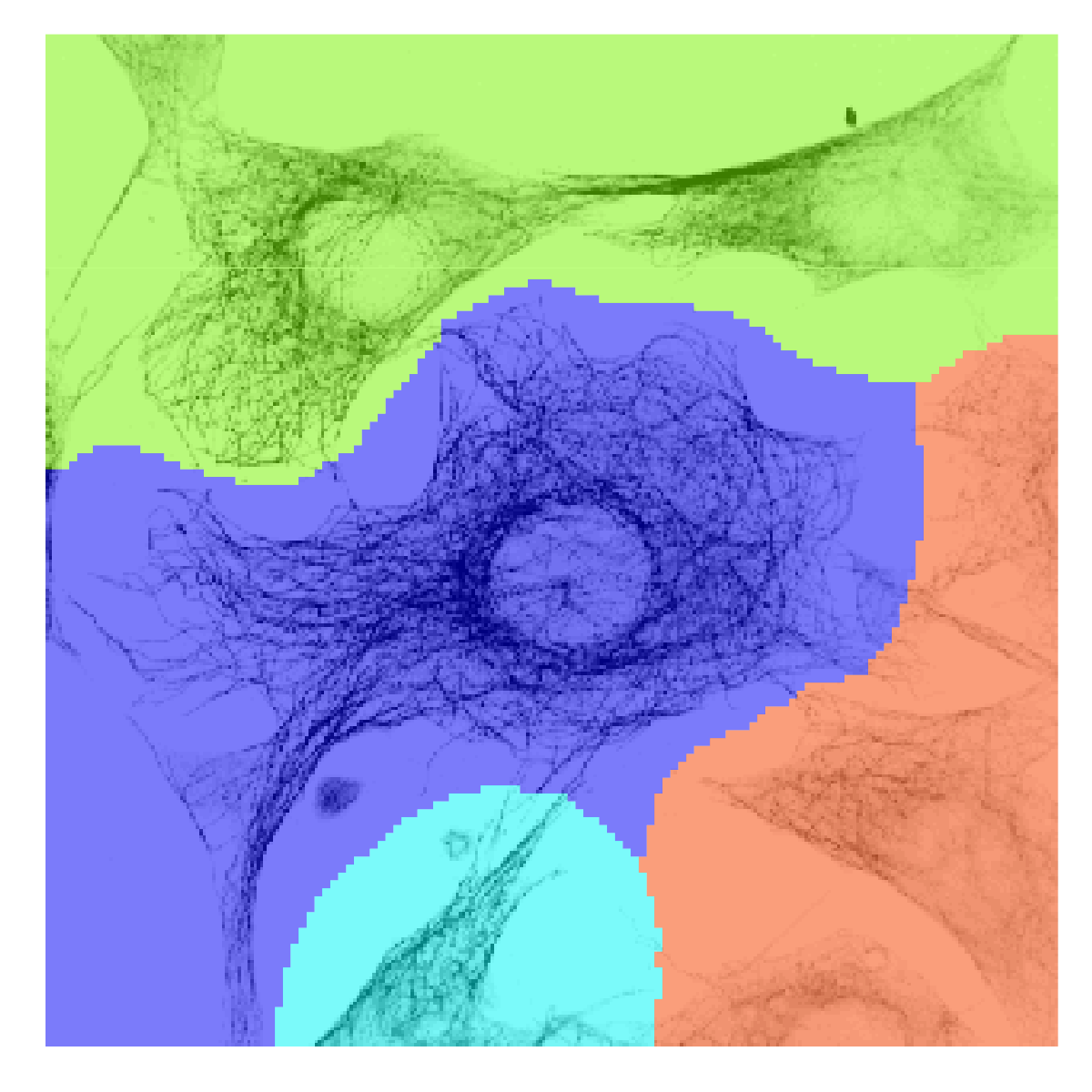}}
        \par
        \subfloat[\multixistref, $k=5$]{\includegraphics[width=0.24\textwidth]{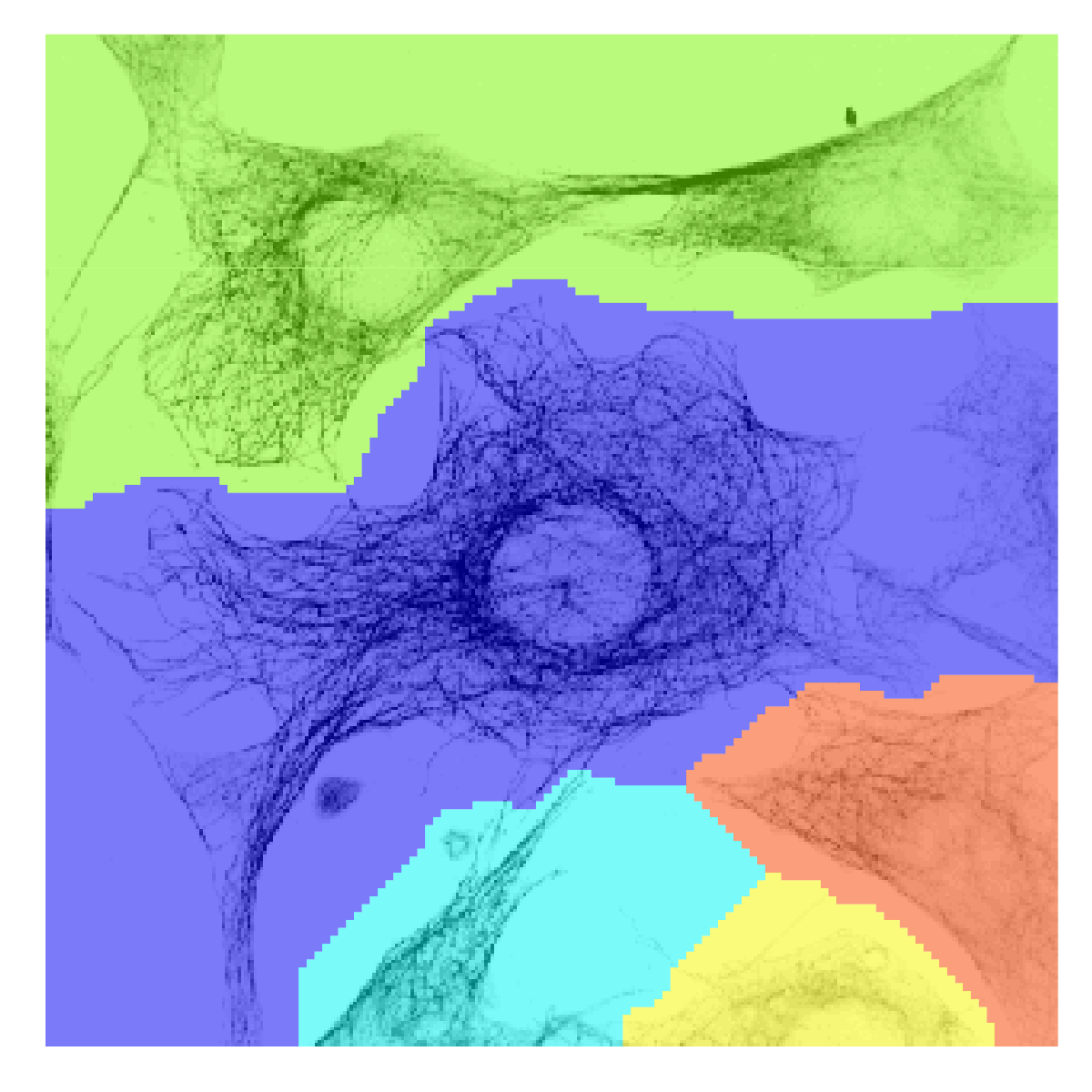}}%
        \subfloat[Spec.\ clust., $k=5$]{\includegraphics[width=.24\textwidth]{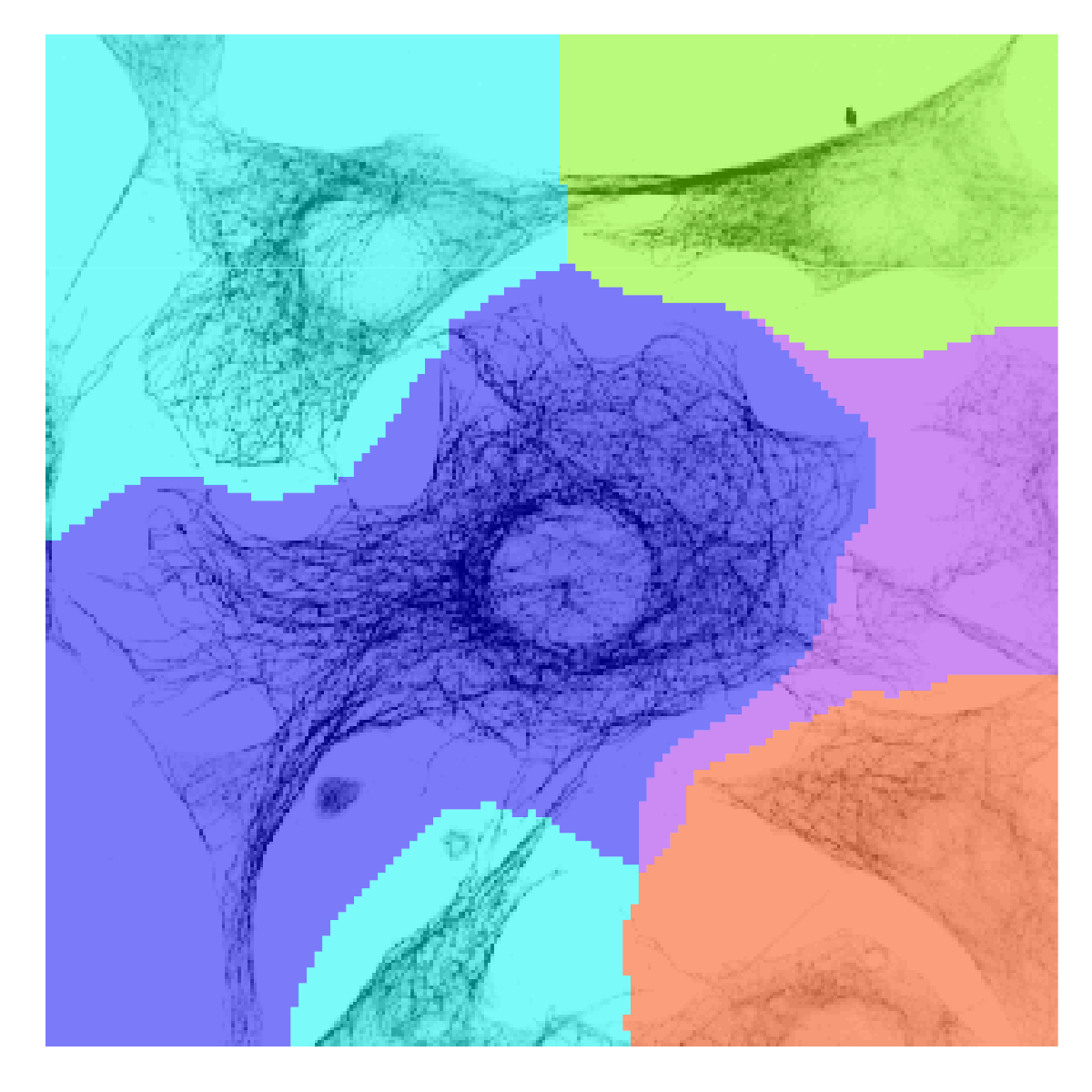}}
        \par
        \subfloat[\multixistref, $k=6$]{\includegraphics[width=0.24\textwidth]{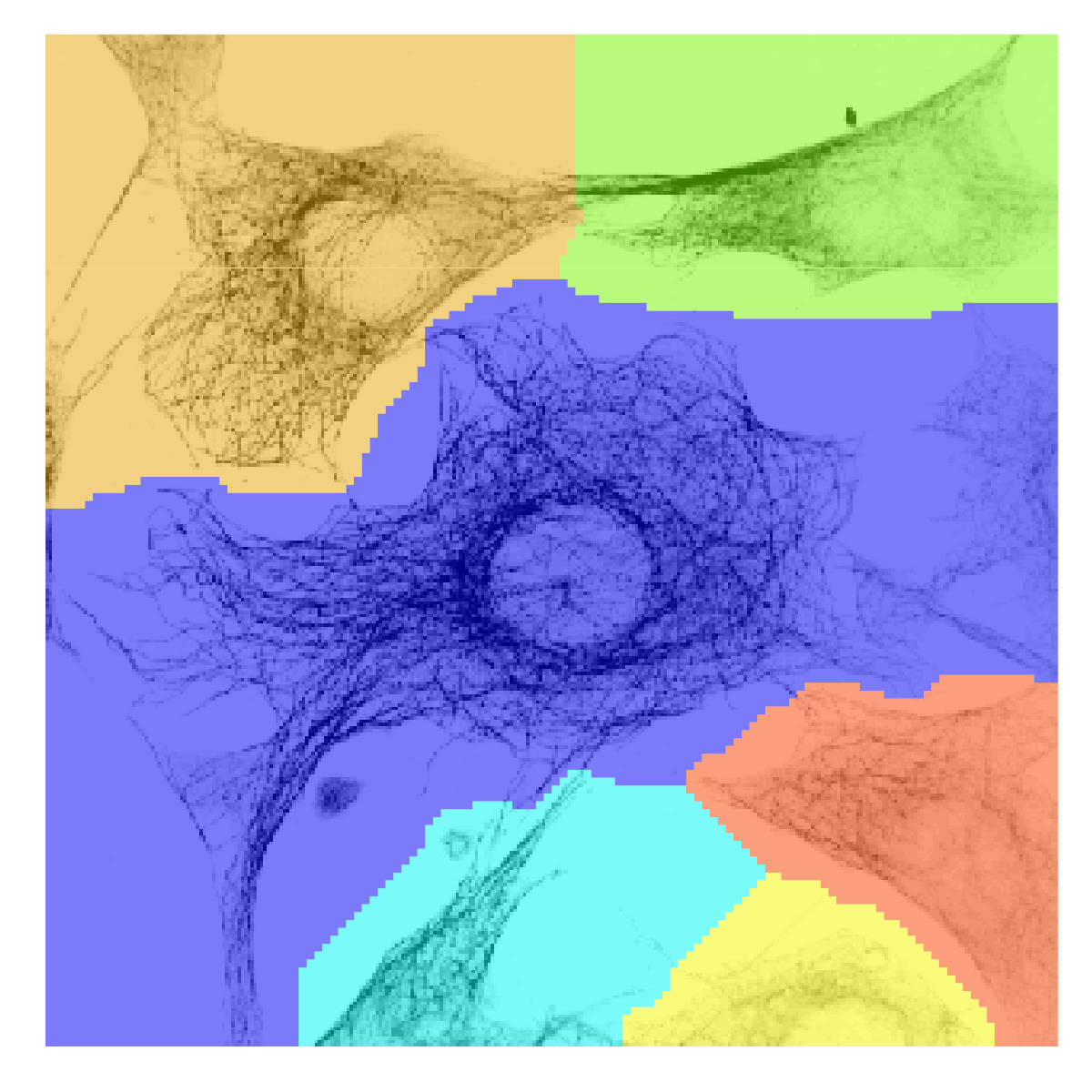}}%
        \subfloat[Spec.\ clust., $k=6$]{\includegraphics[width=.24\textwidth]{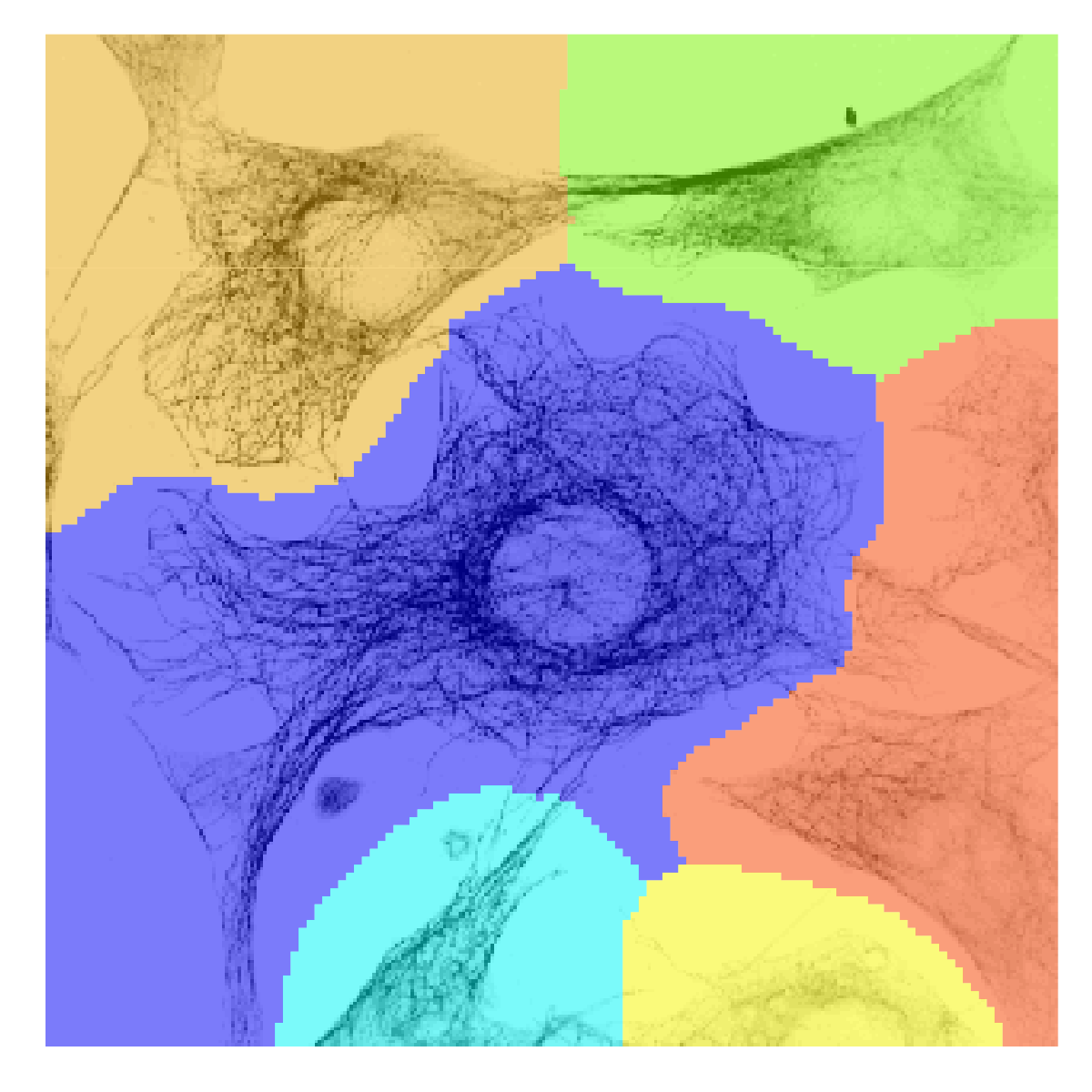}}
        \par
        \subfloat[\multixistref, $k=7$]{\includegraphics[width=0.24\textwidth]{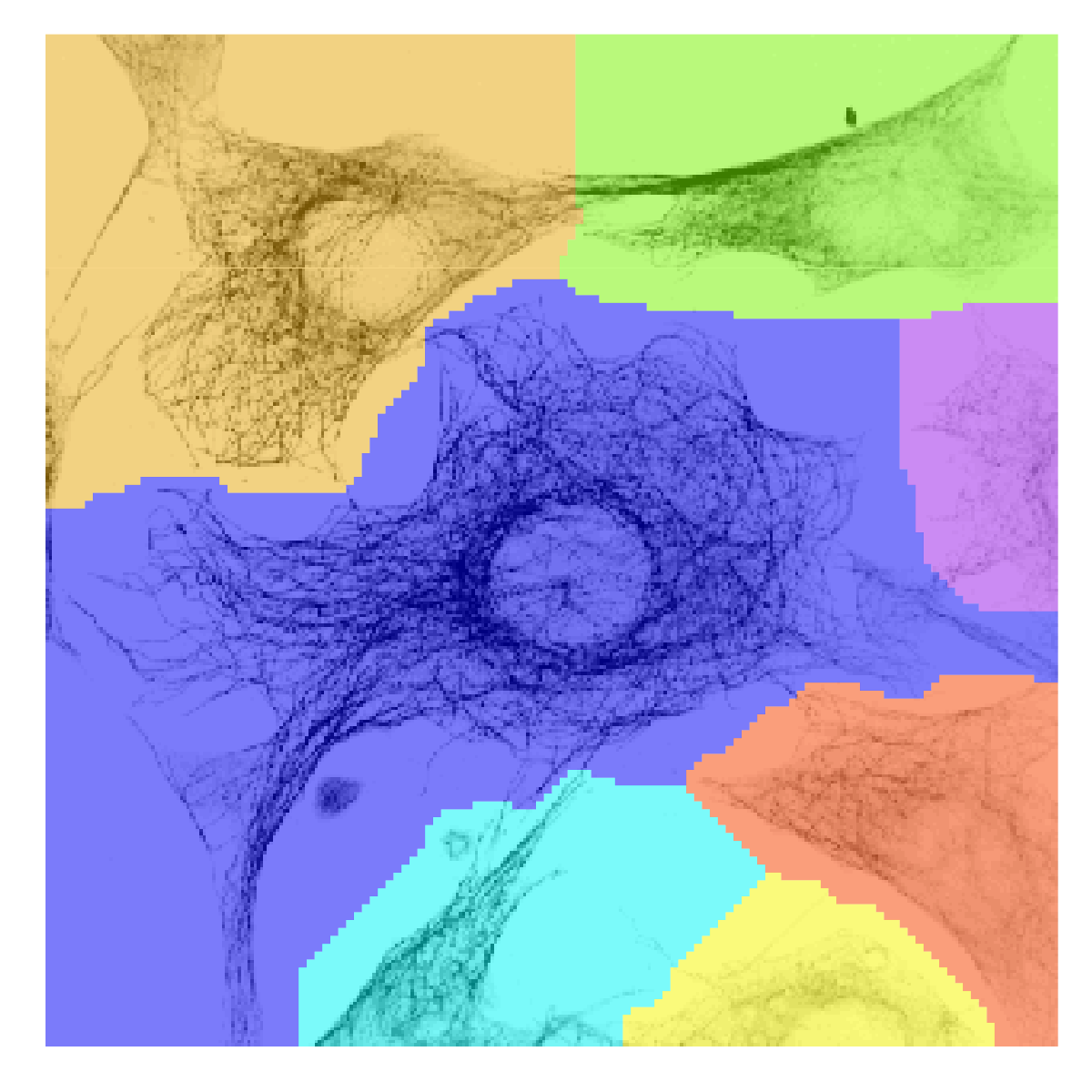}}%
        \subfloat[Spec.\ clust., $k=7$]{\includegraphics[width=.235\textwidth]{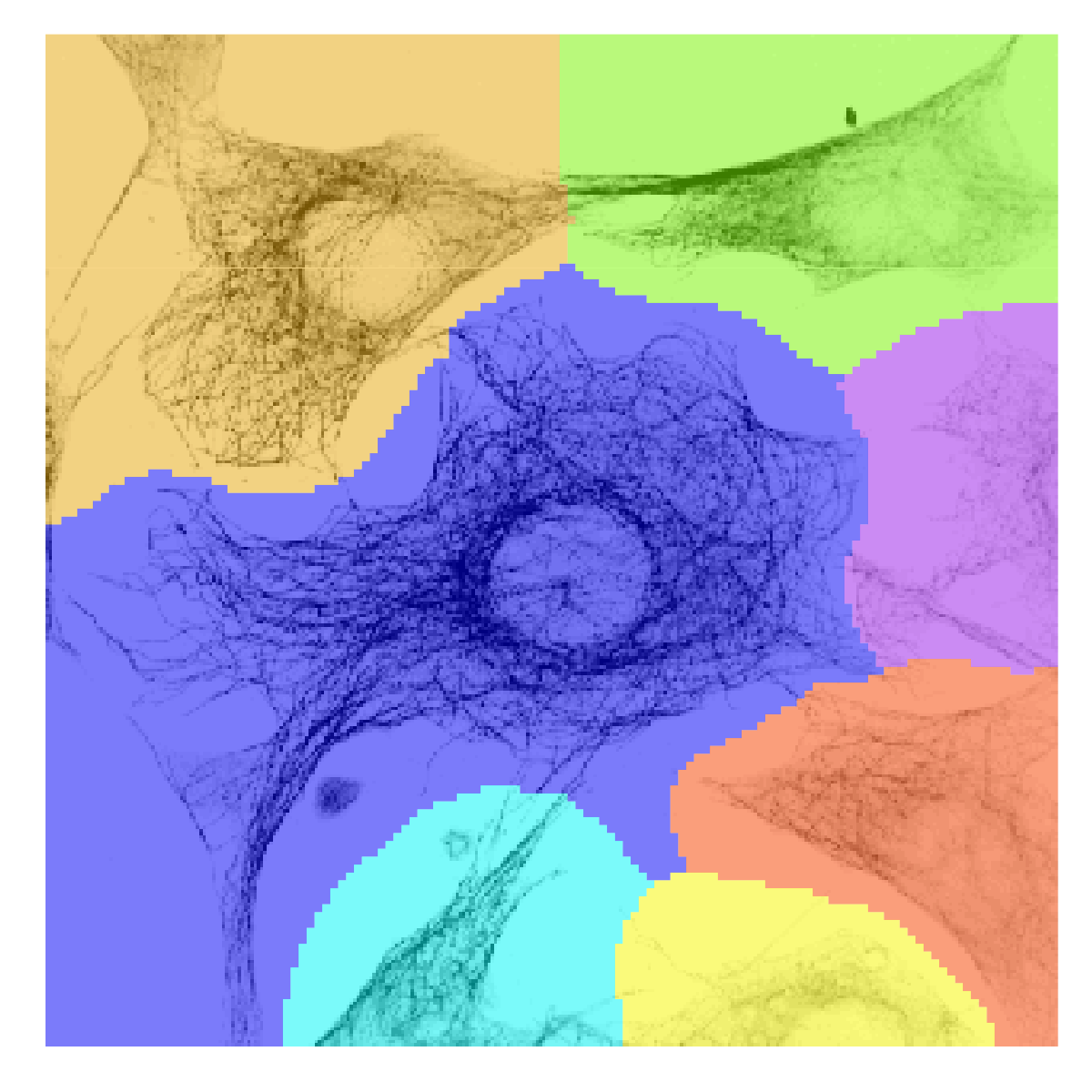}}
        \par
        \subfloat[\multixistref, $k=8$]{\includegraphics[width=0.24\textwidth]{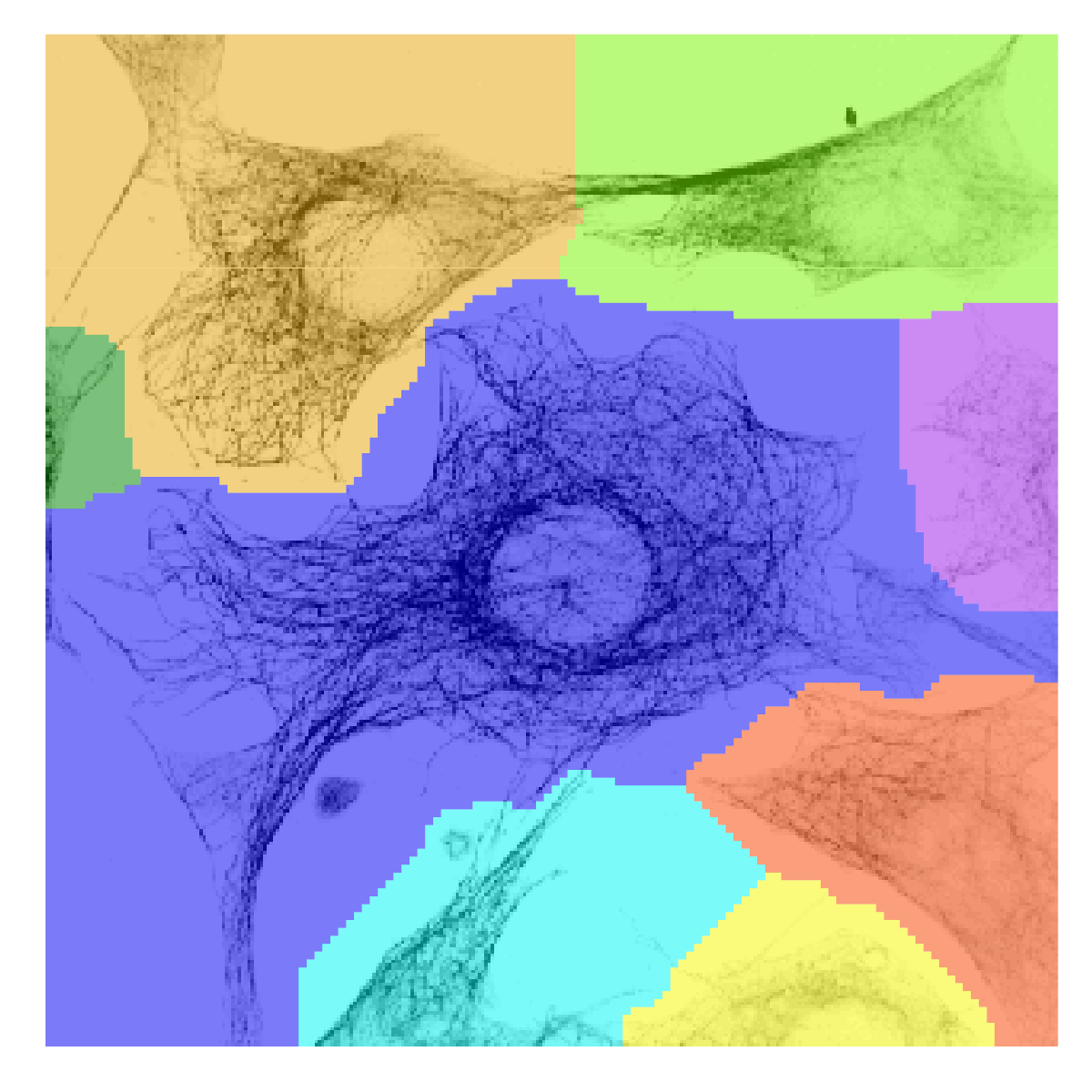}}%
        \subfloat[Spec.\ clust., $k=8$]{\includegraphics[width=.24\textwidth]{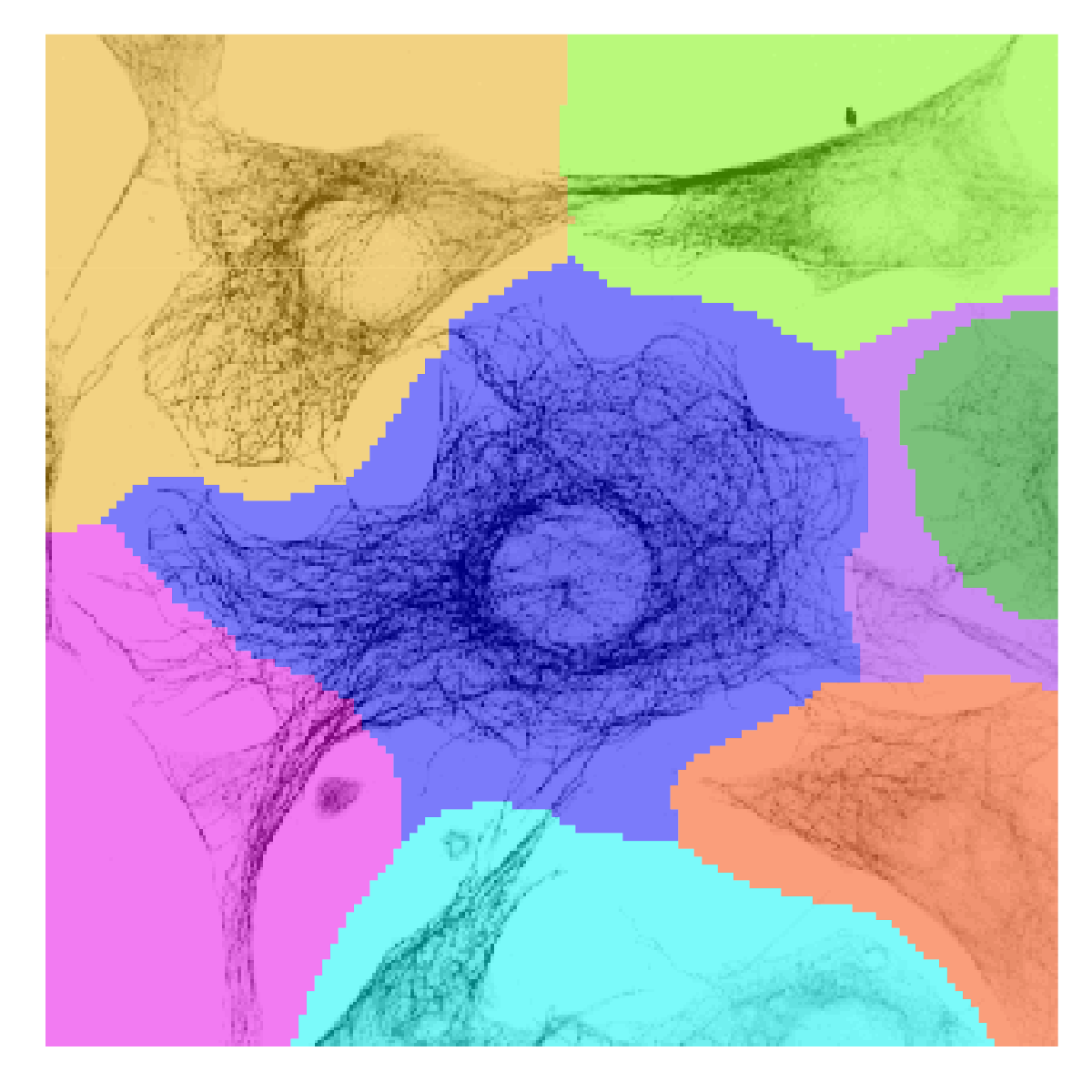}}
        \par
        \subfloat[\multixistref, $k=9$]{\includegraphics[width=0.24\textwidth]{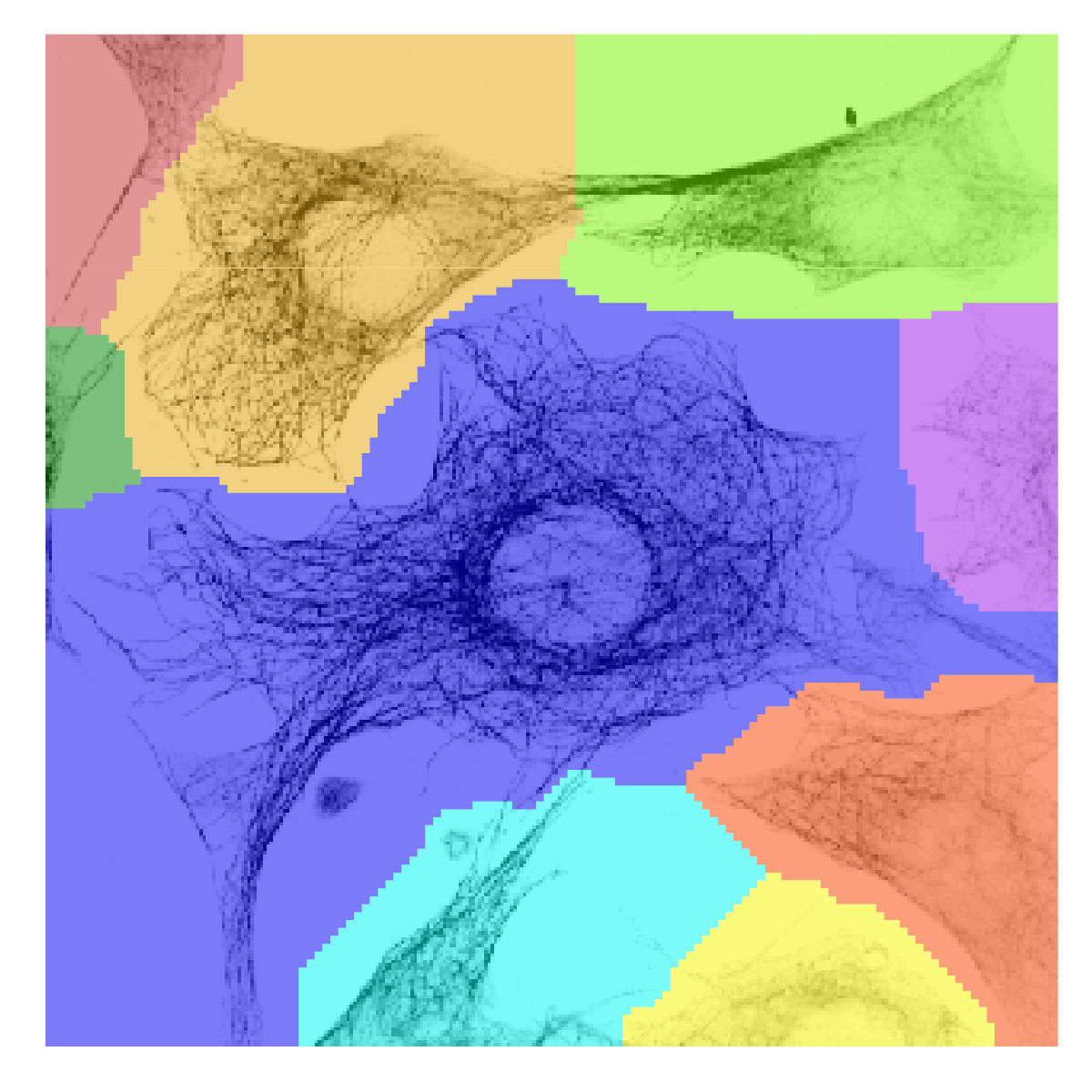}}%
        \subfloat[Spec.\ clust., $k=9$]{\includegraphics[width=.24\textwidth]{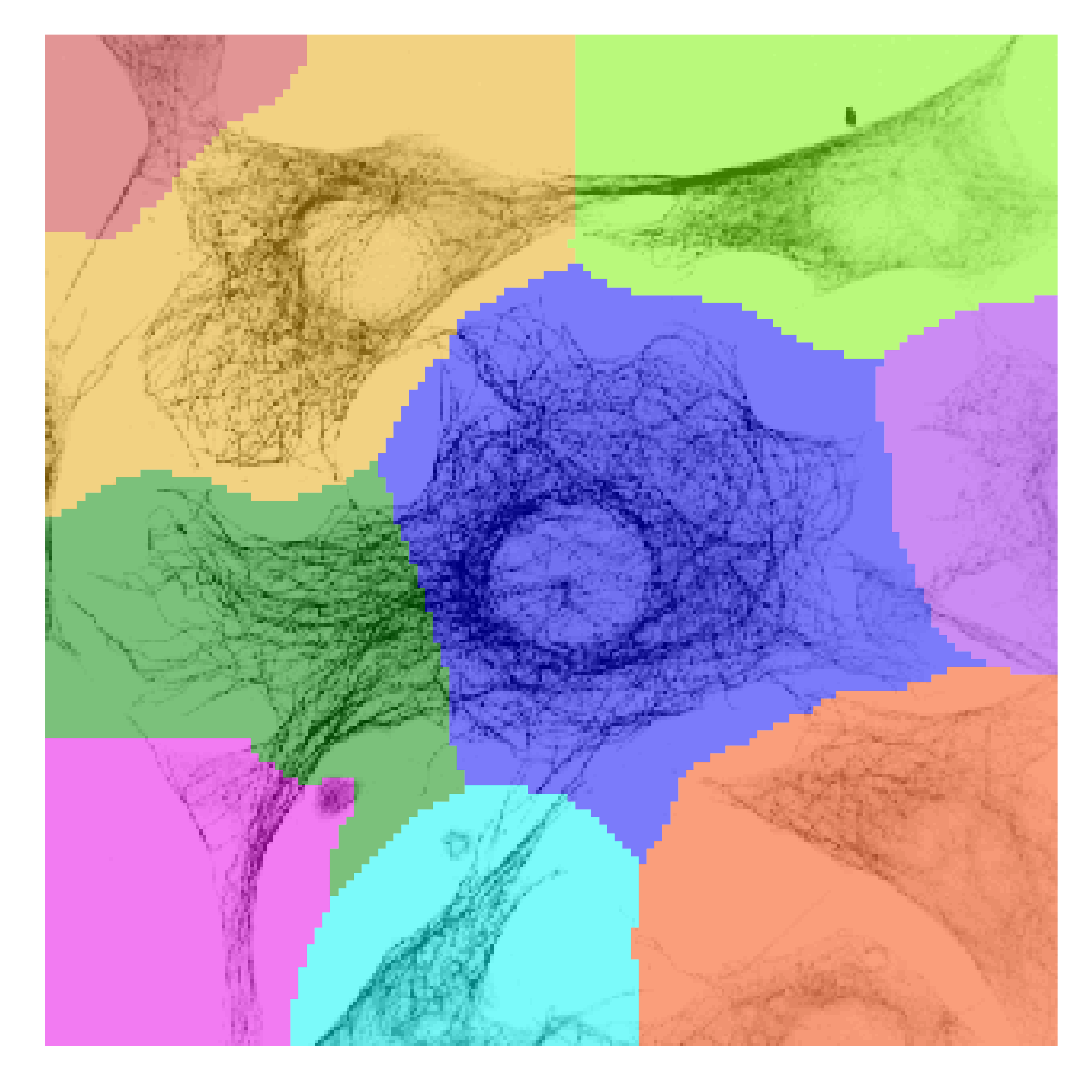}}
    \end{multicols}
    \caption{Extension of \cref{img:kncut_cell_cluster_example} on image segmentation of microtubules in NIH 3T3 cells. The colors visualize the resulting $k\in\{2,\ldots,9\}$ partitions via \multixistref and spectral clustering, and the underlying cell clusters are visible in black. The case $k=10$ is depicted in \cref{img:kncut_cell_cluster_example}.}
    \label{apdx:img:kncut_vs_spec_clust}
\end{figure*}

\subsection{Correctness of \texorpdfstring{\xistref}{Xist}}
\label{apdx:sub:xist_consistency_proof}

In the following we present a proof for the correctness of our algorithms as previously claimed in \cref{lem:xvst_loc_max_merge_equivalence}. As seen in the proof of this theorem, \xistref and the \xvstnameref yield the same output under \cref{assumptions:uniqueness}. Here, we show the remaining claim, namely that \xistref outputs $\min_{s,t\in\Vloc}\XC_{S_{st}}(G)$. Clearly, by definition of \xistref (specifically lines~\ref{alg:xist:ifmin}-\ref{alg:xist:ifmin_end}), we are only left to show that through the selection method via $\tau$, $st$-MinCuts for all pairs $s,t\in A:=\Vloc$ are considered.

We start with some preliminary results. The first of these statements and their proofs are due to \cite{GomoryHu1961}, and the overall structure of this section as well as the remaining claims and the proof of \cref{apdx:thm:xist_correctness} are adapted from \cite{Gusfield1990}, with changes necessary to account for the specific design of \xistref and the generalization to only consider vertices from an arbitrary (but fixed) subset $A\subset V$.

\begin{lemma}
\label{apdx:lem:inequality_stmincuts}
For a weighted graph $G=(V,E,\mat{W})$ and any $v_1,\ldots,v_k\in V$, $k\in\N_{\geq 2}$, with $v_i\neq v_{i+1}$ for $i\in\{1,\ldots,k-1\}$, denote $c_{ij} := \MC_{S_{v_i v_j}}(G)$. Then
\begin{equation}
    c_{1,k}\geq \min\{c_{1,2}, c_{2,3}, \ldots, c_{k-1,k}\}.
\label{apdx:eq:inequality_stmincuts}
\end{equation}
\vspace{-5mm}
\end{lemma}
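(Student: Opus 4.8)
The plan is to prove the chain inequality by induction on $k$, reducing everything to the single triangle inequality $c_{1,3}\ge\min\{c_{1,2},c_{2,3}\}$, which is the base case $k=3$ (the case $k=2$ being vacuous). For the base case I would invoke the fundamental structural fact about $st$-MinCuts (essentially the submodularity of the cut function, or the "non-crossing"/uncrossing argument of Gomory and Hu): fix an $v_1v_3$-MinCut partition $S:=S_{v_1v_3}$, so $v_1\in S$ and $v_3\in\comp{S}$. The vertex $v_2$ lies in exactly one of $S$ or $\comp{S}$. If $v_2\in\comp{S}$, then $S$ is a partition separating $v_1$ from $v_2$, so it is an admissible competitor in the minimization defining $\MC_{S_{v_1v_2}}(G)$; hence $c_{1,3}=\MC_S(G)\ge \MC_{S_{v_1v_2}}(G)=c_{1,2}\ge\min\{c_{1,2},c_{2,3}\}$. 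Symmetrically, if $v_2\in S$, then $S$ separates $v_2$ from $v_3$ and the same argument gives $c_{1,3}\ge c_{2,3}\ge\min\{c_{1,2},c_{2,3}\}$. Either way the base case holds.

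For the inductive step, assume the statement for chains of length $k-1$ and consider $v_1,\dots,v_k$. By the induction hypothesis applied to $v_1,\dots,v_{k-1}$,
\[
c_{1,k-1}\ge\min\{c_{1,2},c_{2,3},\dots,c_{k-2,k-1}\}.
\]
Applying the base-case triangle inequality to the triple $v_1,v_{k-1},v_k$ gives $c_{1,k}\ge\min\{c_{1,k-1},c_{k-1,k}\}$. Combining the two displays yields
\[
c_{1,k}\ge\min\{c_{1,k-1},c_{k-1,k}\}\ge\min\{c_{1,2},\dots,c_{k-2,k-1},c_{k-1,k}\},
\]
which is exactly \eqref{apdx:eq:inequality_stmincuts}. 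The hypothesis $v_i\neq v_{i+1}$ is used only to ensure each $c_{i,i+1}$ is well defined (an $st$-MinCut requires $s\neq t$); note that non-consecutive $v_i$ may coincide without harm, and if some $c_{i,i+1}=0$ the inequality is trivial.

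The only genuine content is the base case, and the potential obstacle there is the precise justification that $S_{v_1v_3}$ is a legitimate competitor for $S_{v_1v_2}$ (resp.\ $S_{v_2v_3}$): this is immediate from the definition $\mathcal{S}_{st}^*=\{S\subset V: s\in S,\ t\notin S\}$ together with $v_1\in S$, $v_3\in\comp{S}$, and the placement of $v_2$, so in fact no uncrossing argument is even needed — only that $\MC_{S_{st}}(G)$ is a \emph{minimum} over $\mathcal{S}_{st}^*$, and that any partition separating $s$ and $t$ lies in $\mathcal{S}_{st}^*$ (possibly after swapping $S$ and $\comp{S}$, which does not change the cut weight). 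I would state this observation explicitly as it is the crux, and then the induction is routine.
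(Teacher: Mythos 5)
Your proof is correct in substance and rests on the same key observation as the paper's: the partition attaining the $v_1v_k$-MinCut must separate some pair of vertices along the chain and is therefore a feasible competitor in the minimization defining that pair's MinCut. The paper applies this observation once, directly, for general $k$: since $v_1\in S$ and $v_k\in\comp{S}$ for the optimal partition $S=S_{v_1v_k}$, some consecutive pair $v_i,v_{i+1}$ must lie on opposite sides of $S$, whence $c_{1,k}=\MC_S(G)\ge c_{i,i+1}$. You instead isolate the $k=3$ case as a triangle inequality and chain it by induction; this buys nothing over the direct pigeonhole argument and introduces one small wrinkle you should patch: in the inductive step you apply the triangle inequality to the triple $v_1,v_{k-1},v_k$ and the induction hypothesis to $v_1,\dots,v_{k-1}$, both of which presuppose $v_1\neq v_{k-1}$ (so that $c_{1,k-1}$ is defined), yet the lemma only forbids coincidence of \emph{consecutive} vertices. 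The case $v_1=v_{k-1}$ is trivial ($c_{1,k}=c_{k-1,k}$ already belongs to the set being minimized), but it must be said; your remark that non-consecutive coincidences are ``without harm'' asserts this rather than proves it. With that one sentence added the argument is complete, and your explicit justification of feasibility (membership in $\mathcal{S}^*_{st}$, up to swapping $S$ and $\comp{S}$, which leaves the cut weight unchanged) is in fact cleaner than the paper's somewhat garbled phrasing of the same step.
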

\begin{proof}
Suppose the claim does not hold, i.e.\ $c_{1,k} < c_{i,i+1}$ for all $i=1,\ldots,k-1$. Let $S\subset V$ denote the partition attaining the $v_1 v_k$-MinCut $c_{1,k}$. Then, there exists an $i\in\{1,\ldots,k-1\}$ such that $v_i$ and $v_{i+1}$ are on different sides of $S$ (i.e.\ either $v_i\in S$ and $v_{i-1}\in\comp{S}$ or vice versa) because $v_1\in S$ and $v_k\in\comp{S}$. Then, for this $i$, the partition attaining $c_{i,i+1}$ is also a valid $v_1 v_k$-cut, and thus, as $c_{1k}$ is the $v_1 v_k$-MinCut value, $c_{1k}\geq c_{i,i+1}$, contradicting the supposition.
\end{proof}

\begin{corollary}
For a weighted graph $G=(V,E,\mat{W})$ and pairwise distinct $s,t,v\in V$, $\min\{\MC_{S_{st}}(G), \MC_{S_{sv}}(G), \MC_{S_{vt}}(G)\}$ is not uniquely attained.
\label{apdx:cor:three_mincut_nonuniqueness}
\end{corollary}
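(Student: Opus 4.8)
The plan is to derive \cref{apdx:cor:three_mincut_nonuniqueness} directly from \cref{apdx:lem:inequality_stmincuts} by applying the inequality \eqref{apdx:eq:inequality_stmincuts} to all three cyclic orderings of the triple $s,t,v$. Write $a\coloneqq\MC_{S_{st}}(G)$, $b\coloneqq\MC_{S_{sv}}(G)$, $c\coloneqq\MC_{S_{vt}}(G)$; since these are $st$-MinCut values they are symmetric in their two subscript vertices, so the same three numbers appear regardless of the order in which we list each pair. Applying \cref{apdx:lem:inequality_stmincuts} with $(v_1,v_2,v_3)=(s,t,v)$ gives $\MC_{S_{sv}}(G)\geq\min\{\MC_{S_{st}}(G),\MC_{S_{tv}}(G)\}$, i.e.\ $b\geq\min\{a,c\}$; with $(v_1,v_2,v_3)=(t,s,v)$ we get $c\geq\min\{a,b\}$; and with $(v_1,v_2,v_3)=(s,v,t)$ we get $a\geq\min\{b,c\}$. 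The standard elementary consequence of these three inequalities is that the minimum of $\{a,b,c\}$ is attained at least twice, which is exactly the assertion.

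Concretely, I would argue as follows. Let $\mu\coloneqq\min\{a,b,c\}$ and suppose for contradiction that $\mu$ is attained by exactly one of the three values; without loss of generality (the three inequalities above are symmetric under permuting $s,t,v$) say $a<b$ and $a<c$. Then $\min\{b,c\}>a$, so the inequality $a\geq\min\{b,c\}$ fails, a contradiction. Hence at least two of $a,b,c$ equal $\mu$, which means the minimum is not uniquely attained. This is short enough that it can be written out in full rather than merely sketched.

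One small point worth making explicit in the write-up is the role of the hypothesis that $s,t,v$ be pairwise distinct: this is needed so that all three $st$-MinCut values $\MC_{S_{st}}(G)$, $\MC_{S_{sv}}(G)$, $\MC_{S_{vt}}(G)$ are defined (an $st$-MinCut requires $s\neq t$), and so that each pair used in the application of \cref{apdx:lem:inequality_stmincuts} consists of two genuinely different vertices, as required there. I do not anticipate a real obstacle here; the only thing to be careful about is the bookkeeping of which of the three cyclic orderings yields which of the three inequalities, and making sure the symmetry argument ("without loss of generality") is justified by the fact that permuting $s,t,v$ permutes the roles of $a,b,c$ while leaving the hypotheses invariant.
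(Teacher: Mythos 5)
Your proof is correct and follows exactly the route the paper intends: the paper states this corollary without proof as an immediate consequence of \cref{apdx:lem:inequality_stmincuts}, and your argument --- applying the lemma with $k=3$ to the three orderings of $\{s,t,v\}$ and concluding that the minimum must be attained at least twice --- is the standard derivation, carried out correctly (including the observation that $\MC_{S_{st}}(G)=\MC_{S_{ts}}(G)$ by undirectedness).
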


The following \cref{apdx:lem:noncrossing_cut_existence} shows that for any two vertices $u,v$ on the same side of an $st$-MinCut there is a $uv$-MinCut that preserves the $st$-MinCut partition by remaining on one side of the cut.

\begin{lemma}
Let $G=(V,E,\mat{W})$ be a weighted graph, let $s,t\in V$, $s\neq t$, and $u,v\in S_{st}$, $u\neq v$. Then, if $t\in \comp{S}_{uv}$, $S_{st}\cap S_{uv}$ is a $uv$-MinCut, and if $t\in S_{uv}$, $S_{st}\cap\comp{S}_{uv}$ is a $uv$-MinCut.
\label{apdx:lem:noncrossing_cut_existence}
\end{lemma}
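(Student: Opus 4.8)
The plan is to carry out the standard ``uncrossing'' argument that underlies the Gomory--Hu construction, exploiting submodularity of the graph cut function. For $A\subseteq V$ abbreviate the cut weight by $w(A):=\sum_{i\in A,\,j\in\comp{A}}w_{ij}$, so that $w(A)=w(\comp{A})=\MC_A(G)$, and call $A$ an \emph{$xy$-cut} if $x\in A$ and $y\notin A$. The single structural ingredient I would need is the inequality
\[
w(A)+w(B)\;\geq\;w(A\cap B)+w(A\cup B)\qquad\text{for all }A,B\subseteq V,
\]
which I would establish in one line of edge bookkeeping: the difference between the two sides equals twice the total weight of edges with one endpoint in $A\setminus B$ and the other in $B\setminus A$, hence is nonnegative. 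Everything else is membership bookkeeping plus the defining minimality of $S_{st}$ and $S_{uv}$.

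For the case $t\in\comp{S}_{uv}$ I would apply this inequality with $A=S_{st}$ and $B=S_{uv}$. Two checks do the work. First, $S_{st}\cap S_{uv}$ is a $uv$-cut: it contains $u$ since $u\in S_{st}$ and $u\in S_{uv}$, while $v\notin S_{uv}$; hence $w(S_{st}\cap S_{uv})\geq\MC_{S_{uv}}(G)$ by minimality of $S_{uv}$. Second, $S_{st}\cup S_{uv}$ is an $st$-cut: $s\in S_{st}$, and $t\notin S_{st}$ together with $t\notin S_{uv}$ (this is exactly where the hypothesis $t\in\comp{S}_{uv}$ enters) give $t\notin S_{st}\cup S_{uv}$; hence $w(S_{st}\cup S_{uv})\geq w(S_{st})$ by minimality of $S_{st}$. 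Substituting the second bound into the submodular inequality cancels $w(S_{st})$ and leaves $w(S_{uv})\geq w(S_{st}\cap S_{uv})$, which combined with the first bound forces $w(S_{st}\cap S_{uv})=\MC_{S_{uv}}(G)$. Thus $S_{st}\cap S_{uv}$ is a $uv$-MinCut.

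For the case $t\in S_{uv}$ I would reduce to the previous one by passing to the complement: $\comp{S}_{uv}$ is itself a minimum $uv$-cut, and it satisfies $v\in\comp{S}_{uv}$, $u\notin\comp{S}_{uv}$, and $t\notin\comp{S}_{uv}$. Running the identical argument with $A=S_{st}$ and $B=\comp{S}_{uv}$ — noting that $S_{st}\cap\comp{S}_{uv}$ is a $uv$-cut (it contains $v$ but not $u$) and $S_{st}\cup\comp{S}_{uv}$ is an $st$-cut (it contains $s$ but not $t$) — yields $w(S_{st}\cap\comp{S}_{uv})=\MC_{S_{uv}}(G)$, which is the claim. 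The argument is routine once the submodular inequality is in hand; the only point requiring care, and the reason the statement splits into two cases, is verifying that $S_{st}\cup S_{uv}$ (respectively $S_{st}\cup\comp{S}_{uv}$) still separates $s$ from $t$, so that the minimality of $S_{st}$ can legitimately be used to discard that term — I anticipate no genuine obstacle beyond keeping these inclusions straight.
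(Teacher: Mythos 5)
Your proof is correct and is essentially the paper's argument in different clothing: the submodular inequality $w(A)+w(B)\geq w(A\cap B)+w(A\cup B)$, whose slack is twice the weight of the edges between $A\setminus B$ and $B\setminus A$, is exactly the paper's explicit four-block computation that forces $c_{14}=0$ (resp.\ $c_{34}=0$), and both proofs then combine it with the same two minimality comparisons, one against $S_{uv}$ and one against $S_{st}$. The only difference is presentational — the paper unpacks the edge bookkeeping by hand via the quantities $c_{k\ell}$ instead of quoting submodularity as a packaged inequality.
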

\begin{proof}
For the $uv$-MinCut partition $S_{uv}$ in $G$, define
\[
A_1 := S_{st}\cap S_{uv},\quad A_2 := S_{st}\cap \comp{S}_{uv},\quad A_3 := \comp{S}_{st}\cap S_{uv},\text{ and } A_4 := \comp{S}_{st}\cap \comp{S}_{uv}.
\]
W.l.o.g.\ assume that $s,u\in A_1$ (otherwise exchange $u$ and $v$ and/or $A_1$ and $A_3$), and that $v\in A_2$ (otherwise exchange $A_2$ and $A_4$). For $k,\ell\in\{1,2\}$ define
\[
c_{k\ell} := \sum_{\substack{i\in A_k, j\in B_{\ell} \\ \{i,j\}\in E}} w_{ij}.
\]
First assume that $t\in A_3$. Notice that $A_2$ is a valid $uv$-cut in $G$, and that $S_{uv} = A_1\cup A_3$ is the $uv$-MinCut in $G$, so
\begin{align}
0 &\leq \MC_{A_2}(G) - \MC_{S_{uv}}(G) \notag\\
&= (c_{21} + c_{23} + c_{24}) - (c_{12} + c_{14} + c_{32} + c_{34}) \notag\\
&= c_{24} - (c_{14} + c_{34})
\label{apdx:eq:cases_cuts_1}
\end{align}
since $c_{k\ell} = c_{\ell k}$ for all $k,\ell=1,2$. Now assume first that $t\in A_3$. Then, $A_3$ is a valid $st$-cut in $G$, and $S_{st} = A_1\cup A_2$ is the $st$-MinCut in $G$, so that
\begin{align}
0 &\leq \MC_{A_3}(G) - \MC_{S_{st}}(G) \notag\\
&= (c_{31} + c_{32} + c_{34}) - (c_{13} + c_{14} + c_{23} + c_{24}) \notag\\
&= c_{34} - (c_{14} + c_{24})
\label{apdx:eq:cases_cuts_2}
\end{align}
Adding \eqref{apdx:eq:cases_cuts_1} and \eqref{apdx:eq:cases_cuts_2} yields $0\leq -2 c_{14}$ and thus $c_{14} = 0$. Using \eqref{apdx:eq:cases_cuts_1} (or \eqref{apdx:eq:cases_cuts_2}) again gives $c_{24} = c_{34}$, so that
\[
\MC_{A_2}(G) = c_{21} + c_{23} + c_{24} = c_{12} + c_{14} + c_{32} + c_{34} = \MC_{S_{uv}}(G),
\]
i.e.\ $A_2$ also is a $uv$-MinCut in $G$. As $A_2$ is not affected by the contraction of $\comp{S}_{st}$, the claim follows under the assumption that $t\in A_3$.

If, on the other hand, $t\in A_4$, we proceed analogously to before: Modify \eqref{apdx:eq:cases_cuts_1} by noting that $A_1$ is a valid $uv$-cut in $G$, resulting in $0\leq c_{13} - (c_{23} + c_{34})$, and further modify \eqref{apdx:eq:cases_cuts_2} by noting that now $A_4$ is a valid $st$-cut in $G$, yielding $0\leq c_{34} - (c_{13} + c_{23})$, so that together $c_{34} = 0$ and consequently $c_{13} = c_{23}$. Finally, this results in
\[
\MC_{A_1}(G) = c_{12} + c_{14} + c_{13} = c_{12} + c_{14} + c_{32} + c_{34} = \MC_{S_{uv}}(G),
\]
yielding the claim also in the case of $t\in A_4$.
\end{proof}

In the following we present a short and direct proof that the \xistref algorithm indeed considers $st$-MinCut partitions for all $s,t\in\Vloc$ and, more significantly, that under the mild \cref{assumptions:uniqueness}, the \xvstnameref and \xistref yield the same output. To that end we closely follow the proofs of \citet[Section 2.1]{Gusfield1990}, adapting them to our case whenever necessary, in particular to accommodate for the fact that we are interested in the $st$-MinCut partitions (and not only the cut values themselves) and also the fact that we restrict our attention to only a subset $A\subseteq V$ of vertices (which can be arbitrary for the proof, but for our purposes will of course be $\Vloc$).

To build intuition, we can consider the vector $\tau$ in the \xistref algorithm as representation of a tree -- in fact, in \cref{apdx:thm:xist_correctness} it is shown that this is the Gomory-Hu tree. The interpretation is very simple: There is an edge between $i$ and $\tau_i$ for every $i=1,\ldots,\abs{A}$, where $A\subseteq V$ is the subset of vertices (on $G=(V,E,\mat{W})$) \xistref is applied to, so $\tau$ can be thought of as a tree with branches between $i$ and $\tau_i$. In the following we abbreviate $N:=\abs{A}$ and let $A=\{1,\ldots,N\}\subseteq V:=\{1,\ldots,n\}$ to ease notation. Call $u,v\in A$ \emph{neighbours} at some point in \xistref the $uv$-MinCut is computed (i.e.\ if $v=\tau_u$). Further, for any $u,v\in A$, a sequence of neighbours $\overline{uv} := \{u,v_1,\ldots,v_k,v\}$ such that $u=\tau_{v_1}$, $v_k = \tau_v$ and $v_i = \tau_{v_{i+1}}\in A$ for all $i=1,\ldots,k-1$, for some $k\leq N$, is called a \emph{directed path} from $u$ to $v$.

\begin{lemma}
For a weighted graph $G=(V,E,\mat{W})$ apply \xistref to an $A\subseteq V$ with $s,t\in A$, $s\neq t$, such that $s$ and $t$ are connected by a directed path $\overline{st}$, and let $v\in A$ be connected by a directed path to $t$ such that $v < u$ for any $u\in\overline{st}\setminus\{t\}$. Then $s\in S_{vt}$ if and only if $v\in\overline{st}$.
\label{apdx:lem:gusfield_rules}
\end{lemma}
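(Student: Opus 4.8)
The plan is to analyze the structure of the vector $\tau$ at the moment it is used, exploiting the fact that $\tau$ is built incrementally and that $\tau_j$ is only updated when $v_j \in S_{st}$ for the $st$-MinCut currently being processed. First I would unpack what the hypotheses actually say in terms of the algorithm's history: since $s$ and $t$ are joined by a directed path $\overline{st} = \{s = v_0, v_1, \ldots, v_k, v_{k+1} = t\}$ with $v_i = \tau_{v_{i+1}}$, each edge $\{v_i, v_{i+1}\}$ of this path corresponds to some iteration of the outer loop in which an $st$-type MinCut between $v_i$ and $v_{i+1}$ was computed, and the condition $v < u$ for all $u \in \overline{st} \setminus \{t\}$ means that the vertex $v$ has a smaller index than every interior/source vertex of this path. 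In particular, since $v$ is connected by a directed path to $t$ and $v$ is smaller than all of $v_0, \ldots, v_k$, the point at which $\tau_v$ was last set to its value (pointing toward $t$) happened \emph{after} those path edges were created, or the ordering forces $v$ to attach to $t$ directly; I would make this precise by tracking, for the iteration index $i^\star$ at which $v$ received its current parent, which $st$-MinCut partition $S^\star$ was computed, and observing that $v \in \comp{S^\star}$ while $t$ is reachable through the tree on the $\comp{S^\star}$ side.

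The core of the argument is the \enquote{non-crossing} property from \cref{apdx:lem:noncrossing_cut_existence}: whenever the algorithm computes an $st$-MinCut and later computes a $uv$-MinCut for $u, v$ on the same side, there is a $uv$-MinCut that does not cross the earlier one. Iterating this along the edges of the directed path $\overline{st}$, I would argue that the $vt$-MinCut partition $S_{vt}$ can be taken consistent with the tree structure: $S_{vt}$ separates exactly the subtree hanging off the tree-edge on the path from $v$ toward $t$ that $v$ \enquote{cuts}. Concretely, walking the directed path from $v$ to $t$, the first tree-edge encountered is the one incident to $v$; by \cref{apdx:lem:noncrossing_cut_existence} applied repeatedly (and using that all relevant vertices lie on the correct sides because of the index ordering $v < u$), $S_{vt}$ agrees with the partition induced by removing that first edge. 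Then $s \in S_{vt}$ holds precisely when $s$ is on $v$'s side of that edge, which — by the defining relation $v_i = \tau_{v_{i+1}}$ and the directed-path condition — is equivalent to $v$ itself lying on the path $\overline{st}$. The reverse implication ($v \in \overline{st} \Rightarrow s \in S_{vt}$) I would get from the same non-crossing analysis: if $v$ is an interior vertex of $\overline{st}$, then removing the tree-edge from $v$ toward $t$ leaves $s$ on $v$'s side, so the consistent $vt$-MinCut keeps $s$ with $v$.

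The main obstacle I expect is the bookkeeping around \emph{which} $vt$-MinCut partition is meant: the lemma speaks of \enquote{$S_{vt}$}, but as the paper stresses, $st$-MinCut partitions need not be unique, and \xistref only ever computes \emph{one} of them via an oracle. So the real content is that the specific partition the algorithm produced (determined by the update history of $\tau$) satisfies the stated equivalence — and to nail this down I must show that the oracle's choice is forced to be tree-consistent, or rather that \emph{the algorithm's recorded membership information} (encoded in $\tau$) is what the statement refers to. I would handle this by phrasing everything in terms of the partition $S_{st}$ actually computed at the relevant iteration and using \cref{apdx:lem:noncrossing_cut_existence} to replace it, if necessary, by a tree-consistent representative without changing the cut value; the delicate point is checking that the index-ordering hypothesis $v < u$ is exactly what guarantees that, at iteration $i^\star$, the vertices $v$ and $t$ had not yet been \enquote{separated} by an earlier cut in a way that would contradict tree-consistency. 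A secondary subtlety is the base case $k = 0$ (i.e. $s = \tau_t$, so $\overline{st}$ is a single edge), which should follow directly from the definition of the update rule for $\tau$, and I would dispatch it first to anchor the induction.
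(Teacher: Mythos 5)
Your opening move --- unpacking the hypotheses as statements about the update history of the vector $\tau$ --- is where the paper's proof also starts, and the base case you flag ($s=\tau_t$, a single tree edge) is indeed the right anchor. But the core of your argument diverges from the paper's and contains a genuine gap. You propose to iterate \cref{apdx:lem:noncrossing_cut_existence} along the path to conclude that $S_{vt}$ agrees with the partition induced by removing a tree edge, and, having correctly identified the non-uniqueness problem, you propose to resolve it by replacing $S_{vt}$ with a tree-consistent representative \enquote{without changing the cut value}. This does not close the argument: \cref{apdx:lem:noncrossing_cut_existence} is purely an existence statement (it exhibits \emph{some} non-crossing $uv$-MinCut, namely $S_{st}\cap S_{uv}$ or $S_{st}\cap\comp{S}_{uv}$); it places no constraint on the partition the $st$-MinCut oracle actually returns at the relevant iteration. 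The lemma to be proven is a \emph{membership} claim about that specific returned partition, and this matters downstream: in the proof of \cref{apdx:thm:xist_correctness} the conclusion \enquote{$s\in S_{uv}$} is combined with other membership facts to place $s,u,v,t$ into the four intersection classes of two \emph{actually computed} partitions before \cref{apdx:lem:noncrossing_cut_existence} is invoked there. Substituting a different, tree-consistent minimizer proves a statement about a different object and breaks that use. Your stronger claim that the computed $S_{vt}$ coincides vertex-for-vertex with the subtree partition is false in general when minimizers are not unique.

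The paper's own proof takes a more elementary route that does not touch min-cut structure at all: it is a bookkeeping argument on $\tau$. It characterizes when $\tau_s$ changes value (exactly at an iteration $\ell$ where the cut between $\ell$ and the current common parent $\tau_\ell=\tau_s$ is computed \emph{and} $s$ lies on $\ell$'s side of that computed partition), deduces from the index-ordering hypothesis which vertices were neighbours of $s$ before the $vt$-cut is computed, and reads off on which side of $S_{vt}$ the vertex $s$ lies from that recorded history; \cref{apdx:lem:noncrossing_cut_existence} is never used. To salvage your route you would have to either weaken the lemma to an existence statement and restructure \cref{apdx:thm:xist_correctness} accordingly, or show that the oracle's output is forced to be tree-consistent, which fails without uniqueness assumptions of the type in \cref{assumptions:uniqueness}.
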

\begin{proof}
At the beginning of \xistref, $\tau_s=1$, and, from iterations $2$ to $s-1$ in line~\ref{alg:xist:forloop}, $\tau_s$ changes from one $j$ to one $\ell$, $j,\ell\in\{1,\ldots,N\}$, if and only if (between iterations $2$ and $i-1$) $j$ and $\ell$ were \emph{neighbours} (i.e.\ the cut between $j$ and $\tau_{j}=\ell$ is computed at some point throughout \xistref). Consequently, a node $j\in A$ was a neighbour of $s$ at some point if and only if $j\in\overline{1s}$. As $t < v$, $v$ must have been a neighbour of $s$ before computation of the $vt$-cut. Moreover, as $v < u$ for any $u\in\overline{st}\setminus\{t\}$, $t$ is a neighbour of $s$ throughout the computation of the $vt$-MinCut (with partition $S_{vt}$). Thus, if $v\in\overline{st}$, $s\in S_{vt}$, and $s\notin S_{vt}$ otherwise.
\end{proof}

\begin{theorem}
For a weighted graph $G=(V,E,\mat{W})$ and any subset $A\subseteq V$, \xistref applied to $G$ and $A$ considers $st$-MinCuts for all pairs $s,t\in A$.
\label{apdx:thm:xist_correctness}
\end{theorem}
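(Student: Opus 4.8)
Recall from the proof of \cref{lem:xvst_loc_max_merge_equivalence} that the assertion to be shown amounts to: for every $s\neq t$ in $A$, at least one of the partitions produced in line~\ref{alg:xist:st-mincut} over the whole run of \xistref is an $st$-MinCut partition, i.e.\ it separates $s$ from $t$ \emph{and} its $\MC$-value equals the $st$-MinCut value. Write $\pi(i)$ for the value of $\tau_i$ at the start of iteration $i$ of the outer loop in line~\ref{alg:xist:forloop} (this entry is not modified afterwards), so that the cut produced in iteration $i$ is the one between $i$ and $\pi(i)$; abbreviate $S_i:=S_{i\pi(i)}$ and $w_i:=\MC_{S_i}(G)$. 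Since $\tau$-entries are only ever reset to indices strictly below the current loop counter, $\pi(i)<i$, and therefore the $N-1$ edges $\{i,\pi(i)\}$, weighted by $w_i$, form a spanning tree $T$ of $A$ rooted at its first vertex; this is the \enquote{implicit Gomory--Hu tree} the algorithm builds in the sense of \cite{GomoryHu1961,Gusfield1990}, and the notions of \emph{neighbour} and \emph{directed path} introduced earlier are precisely adjacency and rooted (downward) paths in $T$.

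The plan is to reduce everything to the \emph{structural claim}: for every edge $\{i,\pi(i)\}$ of $T$, deleting it splits $T$ into components $C_i\ni i$ and $\comp{C_i}:=A\setminus C_i\ni\pi(i)$, with $C_i\subseteq S_i$ and $\comp{C_i}\subseteq\comp{S_i}$ (equivalently, $C_i=S_i\cap A$). Granting this, fix $s\neq t$ in $A$, let $P$ be the $T$-path between them, and list its vertices $s=u_0,u_1,\dots,u_\ell=t$, so that consecutive $u$'s are $T$-neighbours and the weight of the edge $\{u_{j-1},u_j\}$ equals $\MC_{S_{u_{j-1}u_j}}(G)$. Any edge $\{i,\pi(i)\}$ on $P$ places $s$ and $t$ in different components of $T\setminus\{i,\pi(i)\}$, so by the structural claim $S_i$ separates $s$ and $t$; being then a valid $st$-cut, $\MC_{S_{st}}(G)\le w_i$. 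Letting $i^\star$ index a minimum-weight edge of $P$ and applying \cref{apdx:lem:inequality_stmincuts} along $P$ yields $\MC_{S_{st}}(G)\ge\min_{1\le j\le\ell}\MC_{S_{u_{j-1}u_j}}(G)=w_{i^\star}$. The two inequalities force $\MC_{S_{i^\star}}(G)=w_{i^\star}=\MC_{S_{st}}(G)$, and $S_{i^\star}$ separates $s$ and $t$; hence $S_{i^\star}$ is an $st$-MinCut partition, as wanted.

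It remains to prove the structural claim, which is the core of the argument and where \cref{apdx:lem:noncrossing_cut_existence} and \cref{apdx:lem:gusfield_rules} are used. I would argue by induction on the outer-loop counter, mimicking the correctness proof of the Gomory--Hu procedure in Gusfield's contraction-free form: maintain the invariant that at the start of iteration $i$ the edges $\{j,\pi(j)\}$ with $2\le j<i$ form a subtree on $\{1,\dots,i-1\}$ whose edges already obey the structural claim, and that every not-yet-processed vertex $j\ge i$ has $\tau_j$ equal to the node of that subtree into whose \enquote{branch} $j$ would be contracted in a Gomory--Hu run. In iteration $i$ the algorithm computes the $i\pi(i)$-MinCut on the uncontracted graph $G$; \cref{apdx:lem:noncrossing_cut_existence} is exactly what permits taking this cut to agree, on $A$ and on the relevant super-node, with the cut a Gomory--Hu run would produce, and \cref{apdx:lem:gusfield_rules} then identifies the $i$-side of $S_i$ as precisely the vertices joined to $i$ by a directed path — which are exactly the ones moved onto $i$'s branch by the re-pointing step \enquote{$\tau_j\leftarrow i$ if $v_j\in S_i$ and $\tau_j=\tau_i$}; thus the invariant persists and the new edge also obeys the claim, and running the induction to $i=N$ finishes it. Alternatively, one can argue per pair: if $s,t$ are ancestor and descendant in $T$, apply \cref{apdx:lem:gusfield_rules} directly along $\overline{st}$; otherwise split the $T$-path at the least common ancestor and apply it to each half. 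The combinatorial content is the same.

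The main obstacle is exactly this last step of working on the uncontracted graph: one has to ensure that the particular $st$-MinCut \enquote{returned by the oracle} in line~\ref{alg:xist:st-mincut} is one of the non-crossing representatives supplied by \cref{apdx:lem:noncrossing_cut_existence} — or, more cleanly, to argue that the two facts actually used downstream (that $S_i$ separates the relevant pair, and that its value is the corresponding MinCut value) are insensitive to which MinCut representative is returned, which is in the spirit of \cref{assumptions:uniqueness} and, under that assumption, is automatic for the pair realising the global optimum. By comparison, verifying that $\tau$ faithfully tracks the branch structure, that $T$ is a spanning tree, and the path-weight manipulation via \cref{apdx:lem:inequality_stmincuts} are routine.
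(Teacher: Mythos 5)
Your reduction is cleanly organized and correctly identifies the point that the paper treats only tersely: beyond the value identity $\MC_{S_{st}}(G)=\min C_{st}$ along the tree path, one must exhibit a \emph{computed} partition that actually separates $s$ from $t$. The derivation of the theorem from your structural claim, via \cref{apdx:lem:inequality_stmincuts} for the lower bound and the separation property for the upper bound, would be correct. The problem is that the structural claim itself, $C_i=S_i\cap A$, is false in general, and it is the load-bearing step. Only the inclusion $C_i\subseteq S_i$ holds: a vertex enters $i$'s branch of the tree precisely by being re-pointed to $i$ at iteration $i$, which requires membership in $S_i$. The reverse inclusion $S_i\cap A\subseteq C_i$ fails because the re-pointing loop at iteration $i$ only inspects vertices $j\geq i$ with $\tau_j=\tau_i$, i.e.\ members of the current super-node. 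A vertex of $A$ that at iteration $i$ sits in a different super-node, or that has already been processed ($j<i$, $j\neq\tau_i$), is never compared against $S_i$; it may well lie in $S_i$ while its final tree position is on $\tau_i$'s side, so it lands in $A\setminus C_i$. \cref{apdx:lem:noncrossing_cut_existence} guarantees only that a non-crossing minimum cut \emph{exists}, not that the oracle in line~\ref{alg:xist:st-mincut} returns it, and the theorem is stated without \cref{assumptions:uniqueness}, so your fallback of appealing to that assumption is not available here. Consequently, for a general minimum-weight edge $\{i,\pi(i)\}$ of the tree path between $s$ and $t$, the assertion \enquote{$S_i$ separates $s$ and $t$} is unjustified, and your proposed induction \enquote{mimicking Gusfield} proves at best that the \emph{tree components} form minimum cuts --- a statement about the tree, not about the partitions $S_i$ that \xistref actually evaluates and outputs.

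The paper's proof is built to avoid exactly this trap. It establishes the identity $c_{st}=\min C_{st}$ by a shortest-counterexample contradiction and, wherever a separation statement about a computed partition is needed, it obtains one only for specially chosen cuts: \cref{apdx:lem:gusfield_rules}, whose ordering hypotheses ensure that the relevant vertex was still in the correct super-node at the moment the cut was computed, handles the directed-path case, and \cref{apdx:lem:noncrossing_cut_existence} is invoked in the two-branch case to manufacture a valid cut of the right value from the intersection of two computed partitions. To repair your argument you would either need to weaken the structural claim to the separation statements that are actually provable (essentially reproducing the paper's case analysis), or modify the algorithm so that it provably re-points against non-crossing representatives --- neither of which is a routine step.
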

\begin{proof}
First, note that for each $i=2,\ldots,N$, \xistref computes an $i\tau_i$-MinCut since always $\tau_i < i$ holds, and thus, after an $i\tau_i$-MinCut is computed in step $i$, $\tau_i$ is not changed afterwards.

Let now $s,t\in A$, $s\neq t$, be arbitrary, and, depending on the context, consider the (directed) path $\overline{st}$ either as a sequence of vertices $s,v_1,\ldots,v_k,t$ or edges $(s,v_1),(v_1,v_2),\ldots,(v_{k-1},v_k),(v_k,t)$. Abbreviating $c_{uv} := \MC_{S_{uv}}(G)$ and $C_{uv} := \{c_{ij}\mid (i,j)\in\overline{uv}\}$ (for any $u,v\in V$, $u\neq v$), we will show that 
$$
c_{st} = \min C_{st}
$$
which yields the claim immediately because \xistref computes all cuts in $\overline{st}$. By \cref{apdx:lem:inequality_stmincuts}, \enquote{$\geq$} holds. For the opposite direction, suppose that \enquote{$\leq$} does not hold, and let $s,t\in A$, $s\neq t$, be the vertices forming the shortest path in $\tau$ with this property, i.e.\ such that $c_{st} > \min C_{st}$.

First, assume that $s$ and $t$ are connected by a directed path in $\tau$, i.e.\ $s=\tau_{v_1}$, $v_k = \tau_t$ and $v_i = \tau_{v_i+1}$ for all $i=1,\ldots,k-1$. As the case $k=0$ was tackled in the beginning of this proof, $v:=v_k\in A\setminus\{s,t\}$. Due to $\overline{st}$ being minimal (and \cref{apdx:lem:inequality_stmincuts}), $c_{sv} = \min C_{sv}$, so $c_{sv} = c_{vt} = \min C_{st}$ by \cref{apdx:cor:three_mincut_nonuniqueness} and the supposition that $c_{st} > \min C_{st}$. However, by \cref{apdx:lem:gusfield_rules}, $S_{vt}$ is also a valid $st$-cut, so $c_{st}\leq c_{vt} = \min C_{st}$, leading to a contradiction.

Second, we tackle the case where $s$ and $t$ are not connected by a directed path, but rather that there is an $r\in A$ such that $\overline{sr}$ and $\overline{rt}$ each form a directed path in $\tau$. As by the first case, $c_{sr} = \min C_{sr}$ and $c_{rt} = \min C_{rt}$ and thus, again, by \cref{apdx:cor:three_mincut_nonuniqueness} and the supposition we have $c_{sr} = c_{rt} = \min C_{st} = \min C_{sr}\cup C_{rt}$. Define $(u,v)\in\overline{sr}$ to be the vertices closest to $r$ with $c_{uv} = \min C_{sr} = c_{sr}$. By \cref{apdx:lem:gusfield_rules}, $s\in S_{uv}$ and, since $c_{st} > \min C_{st} = \min C_{sr}$, $t\in S_{uv}$.

Let $r_1,r_2\in A$ such that $(r_1,r)\in\overline{sr}$ and $(r,r_2)\in\overline{rt}$. W.l.o.g.\ assume that \xistref computed the $r_1 r$-MinCut before the $r r_2$-MinCut. Then $s,u\in S_{r_1 r}\cap S_{uv}$, $v\in S_{r_1 r}\cap \comp{S}_{uv}$ and $t\in \comp{S}_{r_1 r}\cap S_{uv}$. We can now apply \cref{apdx:lem:noncrossing_cut_existence} to the cuts $S_{r_1 r}$ and $S_{uv}$. 
This leads to two cases: If $r\in S_{uv}$, $S_{r_1 r}\cap S_{uv}$ forms an $uv$-MinCut that is also a valid $vr$-cut, and if $r\in\comp{S}_{uv}$, $S_{r_1 r}\cap \comp{S}_{uv}$ constitutes an $uv$-MinCut that is also a valid $st$-cut. In the second case, by definition of $u$ and $v$, $c_{st}\leq c_{uv} = \min C_{st}$, showing the claim. In the first case, $c_{vr}\leq\min C_{st}$, but, since $v$ and $r$ are connected by a directed path in $\tau$, by the above considerations $c_{vr} = \min C_{vr}$ as well as $\min C_{vr} > \min C_{st}$ since $(u,v)$ was the closest pair to $r$ that attains the minimum $\min C_{st}$. Together, $c_{vr} > \min C_{st}$, though this immediately contradicts the definition of $(u,v)$, finishing the proof.
\end{proof}

\subsection{Proof of \texorpdfstring{\cref{lem:xist_complexity}}{Lemma \ref{lem:xist_complexity}}}
\label{apdx:xist_complexity_proof}

\begin{proof}
Since an entry $w_{ij}$ of $\mat{W}$ is zero if $\{i,j\}\not\in E$, we only need $V$ stored as a \emph{list} (or a \emph{self-balancing binary search tree}), and $\mat{W}$ stored as an \emph{array}, as input for the \xvstnameref and \xistref. We first analyze \xistref line by line:
\begin{enumerate}
    \item[1] $\BO(1)$ computations.
    \item[2] Determining $\Vloc$ requires computing the degree of every vertex and determining the local maxima. This can be done simultaneously, and takes at most $n^2$ additions and $n-1$ comparisons while taking $\BO(n^2)$ time.
    \item[3] Determining $N$ takes $\BO(N)$ computations, but only determining whether the algorithm terminates here requires only $\BO(1)$ computations.
    \item[4] $\BO(N)$ runtime due to the creation of $\tau$.
    \item[5--14] Because of line~\ref{alg:xist:forloop} the following steps are executed exactly $N-1$ times:
    \begin{enumerate}
        \item[6] $\BO(1)$ computations.
        \item[7] Orlin's algorithm \cite{Orlin2013} for computing an $st$-MinCut partition (by solving the dual problem of computing a max flow from $s$ to $t$) takes $\BO(nm+m^{31/16}\log^2 n)$, which can be reduced for specific types of graphs, e.g.\ $m=\BO(n)$ yields $\BO(n^2/\log(n))$. If $m=\Omega(n^{1+\eps})$, \cite{KingRaoTarjan1994} provided an $\BO(nm)$ algorithm which was recently improved by \cite{OrlinGong2021} who obtained $\BO\Bigl(\tfrac{nm\log n}{\log\log n + \log\tfrac{m}{n}}\Bigr)$ for $m\geq n$. 
        Thus, by \cite{KingRaoTarjan1994,Orlin2013,OrlinGong2021}, an $st$-MinCut partition can be computed in $\BO(nm)$ time for all possible values of $m$ and $n$.
        \item[9] By assumption, the computation time of the XC value for partition $\comp{S}_{st}$ is $\BO(\kappa)$.
        \item[10--13] $\BO(1)$ computations.
        \item[14-17] Iterating through all $j=1,\ldots,N$ takes $\BO(N)$, and for each $j$, only $\BO(1)$ computations are done, yielding $\BO(N)$ in total.
    \end{enumerate}
\end{enumerate}

Therefore, \xistref runs in $\BO(N\max\{nm,\kappa\})$ time for general $m$, and in $\BO(N\max\{n^2/\log n, \kappa\})$ time for $m=\BO(n)$ by utilizing \cite{Orlin2013} as noted above. As the \xvstnameref iterates over all pairs of vertices in $V$ in lines~\ref{alg:xvst:while_loop} to \ref{alg:xvst:while_end}, to ascertain its complexity we simply substitute $N$ by $n^2$ to obtain $\BO(n^2\max\{nm,\kappa\})$.
\end{proof}

\end{document}